\newtheoremstyle{thm}
{15pt}
{5pt}
{\itshape}
{}
{\bfseries}
{}
{0.5em}
{}
\theoremstyle{thm}
\xpatchcmd{\proof}{\itshape}{\prooflabelfont}{}{}
\newcommand{\prooflabelfont}{\bfseries}
\newtheorem{thm}{Theorem}[section]
\newtheorem{lem}[thm]{Lemma}
\newtheorem{definition}[thm]{Definition}
\newtheorem{property}[thm]{Property}
\title{Finding all minimum cost flows and a faster algorithm for the $K$ best flow problem}
\author[1]{David~K\"onen}
\author[2]{Daniel~Schmidt}
\author[ ]{Christiane~Spisla\footnote{Electronic addresses: \href{mailto:david.koenen@uni-koeln.de}{david.koenen@uni-koeln.de},  \href{mailto:daniel.schmidt@uni-bonn.de}{daniel.schmidt@uni-bonn.de}, \href{mailto:c.spisla@gmx.de}{c.spisla@gmx.de}  }}
\affil[1]{%
	Department of Supply Chain Management \& Management Science\\
	University of Cologne\\
	Albertus-Magnus-Platz\\
	50923 Cologne, Germany
}
\affil[2]{%
	Institut für Informatik V\\
	Universität Bonn\\
	Friedrich-Hirzebruch-Allee 5\\
	53115 Bonn, Germany
}
\begin{document}

\maketitle

\begin{abstract}
  This paper addresses the problem of determining all optimal integer solutions of a linear integer network flow problem, which we call the \emph{all optimal integer flow (AOF)} problem.  We derive an $\mathcal{O}(F(m+n)+mn+M)$ time algorithm to determine all $F$ many optimal integer flows in a directed network with $n$ nodes and $m$ arcs, where $M$ is the best time needed to find one minimum cost flow. We remark that stopping Hamacher's well-known method for the determination of the $K$ best integer flows \cite{Hamacher1995} at the first sub-optimal flow results in an algorithm with a running time of $\mathcal{O}(Fm(n\log n +m)+ M)$ for solving the AOF problem. Our improvement is essentially made possible by replacing the shortest path sub-problem with a more efficient way to determine a so-called \emph{proper zero cost cycle}  using a modified depth-first search technique.   
  As a byproduct, our analysis  yields an enhanced algorithm to determine the $K$ best integer flows that runs in $\mathcal{O} (Kn^3+M)$.  Besides, we give lower and upper bounds for the number of all optimal integer and feasible integer solutions. Our bounds are based on the fact that any optimal solution can be obtained by an initial optimal \emph{tree solution} plus a conical combination of incidence vectors of all \emph{induced cycles} with bounded coefficients.  
\end{abstract}

	\par\vskip\baselineskip\noindent
\textbf{Keywords:} Minimum cost flow,
	K best network flow,
	Integer network flow algorithm,
	All optimal solutions,
	Combinatorial optimization

\newpage 

\section{Introduction}
\label{introduction}

Given a capacitated directed graph with $n$ nodes $V$ and $m$ edges $E$, edge costs, and a flow balance value $b_v$ for each node $v \in V$, a $b$-flow is a network flow that respects the capacities, and each node $v \in V$ has flow balance exactly $b_v$. 
The minimum cost flow (MCF) problem asks for a $b$-flow of minimum total cost.
The problem is one of the fundamental problems in combinatorial optimization.
Among others, the important \emph{minimum-weight perfect matching problem in bipartite graphs} and the \emph{transportation problem} reduce to the minimum cost flow problem. Consequently, the problem models many basic building blocks in a wide range of applications in industry and decision making.
As proof, Chapter 19 of~\cite{Ahuja1993} details applications from karyotyping of chromosomes, building evacuations,  warehouse layouting, inventory planning, and project management.
Solving minimum cost flow problems is a common task in many branch-and-cut algorithms.

As such, the problem is well-studied, and a wide range of polynomial algorithms is known. 
We refer to~\cite{Ahuja1993} for an extensive overview and merely mention that on an input graph with $n$ nodes and $m$ edges, the \emph{enhanced capacity scaling algorithm}~\cite{Orlin1993} solves the problem in time $O\bigl((m \log n)(m + n\log n)\bigr)$, i.e., in strongly polynomial time. 
The network simplex algorithm is a specialization of Dantzig's simplex algorithm for linear programming and has proven efficient in practice.

However, being a fundamental and abstract model, minimum cost flows tend to ignore certain aspects of applications. 
For instance, drawing a planar graph on a rectangular grid with a minimum number of  edge bends can be modeled as a minimum cost flow problem~\cite{Battista1998,Neta2012,Juenger2004}.
The result of this bend-minimization process affects indirectly and not predictably other properties of the final drawing, like the drawing area, the aspect ratio or the total edge length. 
So different minimum cost flows can yield graph drawings all with the minimum number of bends but completely different appearances.
The above criteria (and others) define the readability and \emph{goodness} of a drawing, but eventually, whether a given drawing is good lies in the eye of the beholder.
In situations like these, devising algorithms is a non-trivial task; particularly, if no formal model for adherence exists and recognizing suitable solutions is up to the user.
Here, a quick hands-on answer is to enumerate \emph{all} minimum cost flows, i.e., all possibilities for a bend minimal drawing, and let the user decide on the details. 

Christofides~\cite{Christofides1986} proposes an algorithm that enumerates all \emph{basic} integer minimum cost flows. 
However, the algorithm may enumerate the same solution multiple times.
We are interested in a more efficient enumeration procedure.
Here, the best we can hope for is an algorithm whose worst-case running time is polynomial in $n,m$ \emph{and} the number of enumerated flows, which generally, might be exponential in $n$ and $m$.
To ensure that each solution is only enumerated once, Hamacher~\cite{Hamacher1995} proposes a binary partition approach and considers the $K$ best integer $b$-flow problem: If $f_1,f_2,\dots$ are all feasible $b$-flows in some order of non-decreasing cost, Hamacher's algorithm will return $f_1,\dots, f_K$ in time $O\bigl(Km (n \log n + m) + M\bigr)$, where $M$ is the time needed to find \emph{one} minimum cost flow.
Sedeño-Noda and Espino-Mart\'in~\cite{Sedeno2013} propose another algorithm with the same running time but a reduced memory space requirement of $O(K+m)$.
Both algorithms essentially reduce to solving $O(Km)$ shortest path problems.
This paper proposes an algorithm that finds \emph{all $F$ many optimal integer} $b$-flows, i.e., all minimum cost flows, in time $O(F(m+n)+mn + M)$. We remark that stopping Hamacher's (or Sedeño-Noda's and Espino-Mart\'in's) algorithm at the first sub-optimal flow results in an algorithm with a running time of $O(Fm\bigl(n\log n + m)+M\bigr)$.
Our improvement is essentially made possible by replacing the shortest path sub-problem with a more efficient way to determine a so-called \emph{proper zero cost cycle}  using a modified depth-first search technique.   
As a byproduct, our analysis yields an algorithm for the $K$ best $b$-flow problem that runs in time $O(Kn^3 + M)$.
Our algorithm has the same space requirement as Hamacher's algorithm and may be implemented with the improvement by Sedeño-Noda and Espino-Mart\'in.

Lastly, we remark that other researchers have taken an interest in finding all optimal (or the $K$ best) solutions for combinatorial optimization problems in various applications and refer the reader to the listing in~\cite{Hamacher1995}.

In addition, the determination of the $K$ best flows has a wide range of applications: Dynamic Lot Size problems (see~\cite{Hamacher1995}), multiple objective minimum cost flow problems (see~\cite{Hamacher2007}), or as an instrument to solve minimum cost flow problems with additional constraints (see~\cite{Yan2002}). 
More examples of the practical relevance of determining alternative optimal solutions may be found in~\cite{Galle1989} and~\cite{Baste2019}.

The organization of the paper is as follows. 
In~\Cref{sec:preliminaries}, we briefly introduce the notations and preliminaries. 
\Cref{sec_theoryAndAlgorithm} takes a given minimum cost flow and shows how to obtain another one with a DFS based procedure. \Cref{sec_algoAllOptimal} presents our algorithm for the all optimal flow problem, while~\Cref{improvmentKBest} uses the algorithm to derive a procedure for the $K$ best $b$-flow problem.
We give upper and lower bounds on the number of feasible and optimal flows in~\Cref{UpperBound}.

\section{Minimum cost flows}
\label{sec:preliminaries}

We assume familiarity with
graph theory basics and state some well-known facts about network flows in the sequel. For details, we refer to \cite{Diestel2006} and \cite{Ahuja1993}, respectively.

Let $D=(V,A)$ be a directed graph with $|V|=n$ nodes and $|A| = m$ arcs 
together with integer-valued, non-negative and finite lower and upper capacity bounds $l_{ij}$ and $u_{ij}$, respectively, for each arc $(i,j)$. 
Moreover, let $b_i$ be the integer-valued \emph{flow balance} of the node $i\in V$. A function $f\colon\, A \rightarrow \mathbb{R}_{\ge0}$ is called a feasible  \emph{flow} if $f$ satisfies 

\begin{align*}
l_{ij} \le f_{ij} \le u_{ij} \quad &\text{ for all } (i,j)\in A, &\text{
	(\emph{capacity constraint}) }\\ 
\sum_{j:(i,j)\in A} f_{ij} - \sum_{j:(j,i)\in A} f_{ji} = b_i 	\quad &\text{ for all } i\in V. &\text{
	(\emph{flow balance constraint})}
\end{align*}

Without loss of generality, we assume that $D$ is connected  (see~\cite{Ahuja1993}) and that there is at least one feasible $b$-flow. 
Let $c\colon\, A \rightarrow \mathbb{R}$  be a function that assigns a \emph{cost} to every arc $(i,j)\in A$.
Then, the \emph{minimum cost flow problem} (MCF problem) asks for a feasible $b$-flow $f$ that minimizes $c(f) := \sum_{(i,j) \in A} c_{ij} f_{ij}$. 
In this case, $f$ is called a \emph{minimum cost flow} or \emph{optimal} flow. 

\begin{definition}
	The \emph{all optimal integer flow problem} (AOF problem) consists of determining all integer flows which are optimal.
\end{definition}

The unimodularity property of the above constraints (see~\cite{Ahuja1993}) guarantees that if there is at least one optimal flow, an optimal integer flow exists if $u$ and $b$ are integral. 
We only consider integer flows for the remainder of this paper.

Let $D_f=(V,A_f:= A^{+}  \cup  A^{-})$ be the \emph{residual graph} with respect to a feasible flow $f$, where 
$
A^+ :=\{(i,j) \mid (i,j)\in A,\, f_{ij} < u_{ij}\}\ \text{ and }  
A^- :=\{ (j,i) \mid (i,j) \in A,\, f_{ij}>l_{ij}\}.
$
In $D_f$, the \emph{residual capacities} and \emph{residual costs} are defined by  $ u_{ij}(f) := u_{ij}-f_{ij} > 0$ and $c_{ij}(f):=c_{ij}$, respectively, if $(i,j)\in A^+$ and $u_{ij}(f):=f_{ji}-l_{ji} > 0$ and $c_{ij}(f):=-c_{ji}$, respectively, if $(i,j)\in A^-$. 

The cost of a directed cycle $C$ in $D_f$ is given by $c(f,C) := \sum_{(i,j) \in C} c_{ij}(f)$ and let $\chi(C) \in \{-1,0,1\}^A$ be the (directed) incidence vector of $C$ with
\begin{align*}
\chi_{ij}(C) := \begin{cases}
1, &\text{if $C$ traverses $(i,j)$ in $D_f$},\\
-1, &\text{if $C$ traverses $(j,i)$ in $D_f$},\\
0, &\text{otherwise}\\
\end{cases} && \text{for all $(i,j) \in A$}.
\end{align*}

We say that $f'$ is obtained from $f$ by \emph{augmenting $f$ along $C$ by $\lambda$} if we set $f' := f + \lambda \chi(C)$. 
The cost $c(f')$ of $f'$ is exactly $c(f) + \lambda c(f,C)$.
An undirected cycle $C_D$ in $D$ is called \emph{augmenting cycle} with respect to a flow $f$ if whenever augmenting a positive amount of flow on the arcs in the cycle, the resulting flow remains feasible. 
Each augmenting cycle with respect to a flow $f$ corresponds to a directed cycle $C$ in $D_f$, and vice versa~\cite{Ahuja1993}. We call $C_D$ the \emph{equivalent} of $C$ in $D$. 

The \emph{negative cycle optimality condition} (see e.g.~\cite{Ahuja1993})  states that $f$ is an optimal flow if and only if $D_f$ cannot contain a negative  cost cycle.

The following fact is well-known for two feasible integer flows (see, for example,~\cite{Schrivjer2003}).

\begin{property} \label{prop:flowDecomposition}
	If $f$ and $f^{\prime}$ are two feasible integer flows, then it holds: $$\begin{aligned}
	& f^{\prime}=f+\sum_{j=1}^k \lambda_j\chi({C_j}), \\
	& c(f^{\prime})=c(f)+\sum_{j=1}^k \lambda_jc(f,C_j)
	\end{aligned}$$
	for directed cycles $C_1,\dots, C_k\subset D_f$ and integers $\lambda_1,\dots \lambda_k>0$.
\end{property}

In particular, two optimal integer flows $f$ and $f^{\prime}$ can only differ on directed  cycles $C_j\subset D_f$ with zero cost.

\begin{definition}
	A \emph{$0$-cycle} is a zero cost cycle $C\subset D_f$, i.e., a cycle $C\subset D_f$ with $c(f,C)=0$.
	We call  a $0$-cycle  $C\subset D_f$ \emph{proper}
	if for all arcs $(i,j) \in C$ we have $(j,i) \not\in C$.
\end{definition}

Then, obtaining a second optimal flow seems to be very simple: Identify a $0$-cycle $C\subset D_f$  and augment $f$ along $C$ by $\lambda$ as defined above, i.e., $f^{\prime}= f + \lambda\chi(C)$
(see,~\Cref{fig3} for an example).

\tikzstyle{vertex}=[circle,fill=black,draw=black,minimum size=8pt,inner sep=0]%
\tikzstyle{edge} = [draw,thick,->]%
\tikzstyle{selected vertex} = [vertex, fill=red!24]%
\tikzstyle{selected edge} = [draw,line width=5pt,-,yellow!50]%
\usetikzlibrary{arrows,automata}%
\pgfdeclarelayer{background}%
\pgfsetlayers{background,main}%

\begin{figure}
	\centering
	\begin{subfigure}{0.3\textwidth}
		\centering
		\begin{tikzpicture}[ ->,shorten >=1pt,auto,node distance=2.8cm,%
		semithick]%
		\foreach \pos/\name in {{(-2,0)/a}, {(-0,0)/b}, {(-2,-2)/c},%
			{(-0,-2)/d}, {(-1,-3.5)/e}}%
		\node[vertex] (\name) at \pos {};%
		\path (a) edge[->,thick]   	node[] {(0,1,8)} (b)%
		(a) edge[->,thick]   	node[sloped,below] {(0,1,3)} (c)%
		(a) edge[->,thick]   	node[sloped] {(3,4,4)} (d)%
		(b) edge[->,thick]   	node[sloped] {(5,5,5)} (d)%
		(c) edge[->,thick]   	node[below] {(0,1,1)} (d)%
		(c) edge[->,thick]   	node[sloped,below] {(2,4,2)} (e)%
		(d) edge[->,thick]   	node[sloped,below] {(2,3,1)} (e);%
		\begin{pgfonlayer}{background}%
		\foreach \source / \dest in {c/d,d/e,c/e}%
		\path[selected edge] (\source.center) -- (\dest.center);%
		\end{pgfonlayer}%
		\node at (-1,-4.5) {$D=(V,A)$};%
		\node at (0.2,-3.5) {$C_D$};%
		\node at (-1,1) {$f$};%
		\end{tikzpicture}%
		\caption{}
	\end{subfigure}%
	\hfill%
	\begin{subfigure}{0.3\textwidth}
		\centering	
		\vspace{0.8cm}
		\begin{tikzpicture}[->,shorten >=1pt,auto,node distance=2.8cm,
		semithick]

		\foreach \pos/\name in {{(0,0)/a}, {(2,0)/b}, {(0,-2)/c},
			{(2,-2)/d}, {(1,-3.5)/e}}
		\node[vertex] (\name) at \pos {};
		
		\path (a) edge[->,thick]   	node[] {8} (b)
		(a) edge[->,thick]   	node[left] {3} (c)
		(a) edge[->,thick,bend right]   	node[left] {4} (d)
		(d) edge[->, thick, bend right]  node[right] {-4} (a)
		(d) edge[->,thick]   	node[right] {-5} (b)
		(c) edge[->,thick]   	node[below] {1} (d)
		(c) edge[->,thick,bend right]   	node[left] {2} (e)
		(e) edge[->,thick,bend right]   	node[left] {-2} (c)
		(e) edge[->,thick,bend left]   	node[right] {-1} (d)
		(d) edge[->,thick,bend left]   	node[] {1} (e);
		
		\begin{pgfonlayer}{background}
		\path  (c) edge[draw,line width=5pt,-,yellow!50] (d)
		
		(d) edge[draw,line width=5pt,-,yellow!50,bend left] (e)
		(e) edge[draw,line width=5pt,-,yellow!50, bend right] (c);
		\end{pgfonlayer}

		\node at (1,-4.5) {$D_f=(V,A_f)$};
		
		\node at (2.5,-3.5) {$C$};

		\end{tikzpicture}%
		\caption{}
	\end{subfigure}%
	\hfill
	\begin{subfigure}{0.3\textwidth}%
		\centering
		\begin{tikzpicture}[ ->,shorten >=1pt,auto,node distance=2.8cm,
		semithick]

		\foreach \pos/\name in {{(4,0)/a}, {(6,0)/b}, {(4,-2)/c},
			{(6,-2)/d}, {(5,-3.5)/e}}
		\node[vertex] (\name) at \pos {};
		
		\path (a) edge[->,thick]   	node[] {(0,1,8)} (b)
		(a) edge[->,thick]   	node[sloped,below] {(0,1,3)} (c)
		(a) edge[->,thick]   	node[sloped] {(3,4,4)} (d)
		(b) edge[->,thick]   	node[sloped] {(5,5,5)} (d)
		(c) edge[->,thick]   	node[below] {(1,1,1)} (d)
		(c) edge[->,thick]   	node[sloped,below] {(1,4,2)} (e)
		(d) edge[->,thick]   	node[sloped,below] {(3,3,1)} (e);

		\node at (5,-4.5) {$D=(V,A)$};
		\node at (5,1) {$f^{\prime}=f+\chi(C)$};
		\end{tikzpicture}%
		\caption{}
	\end{subfigure}%
	\caption{Example of an augmentation on a 0-cycle with one unit. (a) Graph $D$, the arcs of which are labeled with $(f_{ij}, u_{ij}, c_{ij}) $. Here, the lower bound $l_{ij}=0$ for all arcs $(i,j)\in A$. The 0-cycle $ C_D $ is marked in yellow.  (b)  The residual graph $D_f$ with $ c_{ij} (f) $ as arc labels and the cycle $ C \subset D_f $ in yellow, where $ C_D $ is the equivalent cycle in $D$.  (c) The graph $D$ with new flow values $f^{\prime}= f+\chi(C)$.} \label{fig3}
\end{figure}
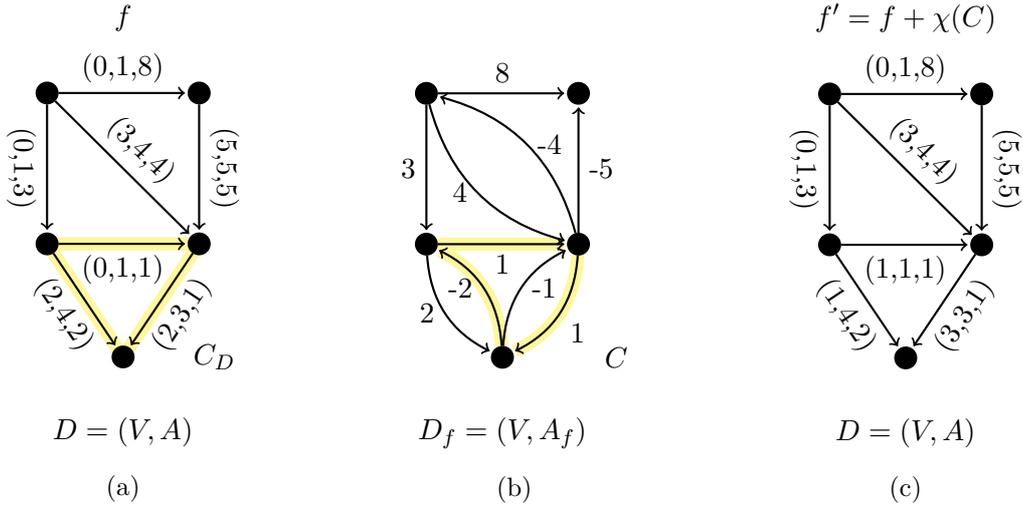

Unfortunately, augmenting $f$ along a $0$-cycle $C$ does not necessarily yield a \emph{different} flow $f'$; for instance, we have $f=f'$ if we augment along $C=( (i,j), (j,i) )$. 
In order to avoid this technical problem, we only consider proper $0$-cycles $C\subset D_f$.

In the exposition of the paper, we assume that $D$ contains no anti-parallel arcs, i.e., if $(i,j)\in A$ then $(j,i)\notin A$, to avoid any confusion with the notation. 
However, the only difficulty with multiple arcs between a pair of nodes is to keep track of the relation that links residual arcs with original arcs, which can be done as mentioned in~\cite{Sedeno2013}. It is straightforward to modify the algorithms in further sections to deal with graphs that contain multiple arcs between a pair of nodes. 
Note that a proper cycle is then defined as a cycle $C$ that does not contain a pair of \emph{symmetric} arcs, i.e., $\{(i,j),(j,i)\}\in D_f$ where both refer to the same arc $(i,j) \in D$.

We associate a real number $y_i$ with each node $i\in V$. We refer to $y_i$ as the node potential of node $i$. For a given set of node potentials $y$
we define the \emph{reduced cost} $\overline{c}_{ij}$ of a given arc $(i,j)$ as $\overline{c}_{ij}:=c_{ij} + y_i - y_j$.
For forward arcs $(i,j) \in A^+$ of the residual network, the \emph{ residual reduced cost} is defined as $\overline{c}_{ij}(f) := c_{ij}(f) + y_i - y_j = c_{ij} + y_i - y_j$. 
For backward arcs $(j,i) \in A^-$, we define the residual reduced cost as $\overline{c}_{ji}(f) = c_{ji}(f) + y_j - y_i = -c_{ij} + y_j - y_i = -\overline{c}_{ij}$.
\begin{property}\label{prop:cost=reducedCostForCycles}
	The cost  of a cycle $C$   in $D_f$ is 
	$$c(f,C) = \sum_{(i,j)\in C} c_{ij}(f)  =\sum_{(i,j)\in C} c_{ij}(f)+ \sum_{(i,j)\in C}(y_i-y_j) = \sum_{(i,j)\in C} \overline{c}_{ij}(f) = \overline{c}(f,C).$$ 
\end{property}
Therefore, it makes no difference whether we search for cycles in $D_f$ with zero cost or zero reduced cost.

Any node potential $y$ that satisfies the well-known \emph{complementary slackness optimality conditions}  (see e.g.~\cite{Ahuja1993}) is \emph{optimal}, and we have the following property.
\begin{property}\label{prop:complSlack}
	Let $f$ be an optimal flow and $y$ an optimal node potential. Then
	$\overline{c}_{ij}(f)\ge0$  for all $(i,j)\in D_f$.
\end{property}

For an optimal flow $f$, we can determine an optimal node potential $y$ by computing shortest path distances in $D_f$  from an arbitrary but fixed root node $r$ to all other nodes $i\in V$. 
Notice that we assume that all nodes $i$ are reachable from $r$ in $D_f$. If this is not the case, artificial arcs with sufficiently high costs are added to $D_f$. In the following, $y$ is considered as an optimal node potential (see~\cite{Ahuja1993}).

\section{Getting from one Minimum Cost flow to another}
\label{sec_theoryAndAlgorithm}

The idea behind the algorithm presented in~\Cref{sec_algoAllOptimal} is to find proper $0$-cycles efficiently. For this purpose, we will reduce the network $D$. To do so we will use a well-known property about the arcs in a proper $0$-cycle. This result will help us reduce the network and obtain a flow problem in which every feasible solution will yield an optimal solution for the original MCF problem. 

The following property is well-known (see, e.g.,~\cite{Calvete1996,Christofides1986}).
\begin{property}
	\label{prop:differenceOnlyIfCij=0}
	Let $f$ and $f'$ be two optimal integer flows. Then, $f$ and $f'$ only differ on arcs with $\overline{c}_{ij} = 0$. Or, in other words, if $f_{ij} \neq f'_{ij}$  then $\overline{c}_{ij} = 0$.
\end{property}

Let $X :=\{(i,j)\in A\mid \overline{c}_{ij}\neq 0 \}$ be the set of arcs with non-zero reduced cost.
We may remove all arcs in $X$ and obtain a reduced network. 
However, it may be that the arcs in $X$ carry flow. 
Consequently, we have to adjust the demand vector accordingly to maintain a feasible flow. 
Consider the reduced network $D'=(V,A^{\prime})$ with $A^{\prime}=A\backslash X$ and new flow balance values 
$b'_i = b_i - \sum_{(i,j) \in X } f_{ij}+ \sum_{(j,i) \in X} f_{ji}
$. We can easily limit any feasible flow in $D$ to a feasible flow in $D'$ by simply adopting all flow values. 
For a flow $f\in D$, we always refer by $f\in D^{\prime}$ to the reduced flow where all flow values are copied for all arcs $a\in A^{\prime}$. 
We denote by $\mathcal{P}^{\prime}$ the set of all feasible integer flows of the reduced network $D^{\prime}$. 
The fact that there is only a finite number of flows in $\mathcal{P}^{\prime}$ follows from the assumption that there is no arc $(i,j)$ with $u_{ij}=\infty$. 

For a feasible solution $f' \in \mathcal{P'}$, we define
$$ f^*(a)= \left\{ \begin{aligned}
& f^{\prime}(a), & & \text{ if } a\in A^{\prime}, \\ &
f(a), & &  \text{ if } a\in X.
\end{aligned} \right.$$

The next lemma demonstrates how every optimal flow in $D$ can be obtained from the combination of an initial optimal flow and some feasible flow in $D'$.

\begin{lem} \label{lem:feasibleInD'=optimalInD}
	Let $f$ be an optimal integer solution. If $f^{\prime}$ is a feasible integer solution in $D^{\prime}$ and $f^*$ is defined as above, then $f^*$ is an optimal solution in $D$ and vice versa.
\end{lem}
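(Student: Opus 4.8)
The plan is to prove both directions of the equivalence by exploiting \Cref{prop:differenceOnlyIfCij=0} together with \Cref{prop:flowDecomposition}. The key observation is that $f^*$ and $f$ agree exactly on the arcs of $X$ (the arcs with nonzero reduced cost) and may differ only on arcs of $A' = A \setminus X$, i.e., on arcs with $\overline{c}_{ij} = 0$. Since the flow balance $b'$ was defined precisely so that a feasible flow $f'$ in $D'$ together with the fixed values $f$ on $X$ reassembles a feasible flow in $D$, the main work is to verify feasibility of $f^*$ in $D$ and then to argue optimality via the cost comparison.

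First I would establish feasibility of $f^*$ in $D$. The capacity constraints are immediate: on arcs of $A'$, $f^* = f'$ respects the capacities because $f'$ is feasible in $D'$ (which inherits the same bounds $l_{ij}, u_{ij}$); on arcs of $X$, $f^* = f$ respects the capacities because $f$ is feasible in $D$. For the flow balance at a node $i$, I would write out $\sum_{j:(i,j)\in A} f^*_{ij} - \sum_{j:(j,i)\in A} f^*_{ji}$ and split each sum into its $A'$-part and its $X$-part. The $A'$-part equals the net flow of $f'$ at $i$ in $D'$, which is $b'_i$; substituting the definition $b'_i = b_i - \sum_{(i,j)\in X} f_{ij} + \sum_{(j,i)\in X} f_{ji}$ and adding back the $X$-part (which uses the values $f_{ij}, f_{ji}$) the correction terms cancel and the total collapses to $b_i$. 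Hence $f^*$ is a feasible integer flow in $D$.

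Next I would argue optimality. Because $f^*$ and $f$ differ only on arcs of $A'$, all of which satisfy $\overline{c}_{ij} = 0$, the reduced-cost difference $\overline{c}(f^*) - \overline{c}(f) = \sum_{(i,j)\in A'} \overline{c}_{ij}\,(f^*_{ij} - f_{ij}) = 0$. Since reduced costs differ from true costs only by the telescoping potential terms $y_i - y_j$ (as in \Cref{prop:cost=reducedCostForCycles}, and because $f^*$ and $f$ have the same flow balance so the potential contribution is identical), this yields $c(f^*) = c(f)$. As $f$ is optimal, $f^*$ attains the minimum cost and is therefore optimal in $D$.

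For the converse, suppose $f^*$ is an optimal integer solution in $D$ whose restriction to $X$ coincides with $f$ (which holds by \Cref{prop:differenceOnlyIfCij=0}, since two optimal flows agree on all arcs with $\overline{c}_{ij}\neq 0$). I would define $f'$ as the restriction of $f^*$ to $A'$ and verify it is feasible in $D'$: the capacity constraints carry over directly, and the flow balance computation is exactly the reverse of the one above, so the net flow of $f'$ at each node $i$ equals $b'_i$. Thus $f'\in\mathcal{P}'$, establishing the vice-versa direction. The main obstacle I anticipate is bookkeeping in the flow-balance calculation — carefully tracking the signs of the $X$-arc corrections (outgoing arcs entering $b'_i$ with a minus, incoming with a plus) so that the cancellation is transparent; the rest is routine once \Cref{prop:differenceOnlyIfCij=0} pins down that $f^*$ and $f$ can differ only on zero-reduced-cost arcs.
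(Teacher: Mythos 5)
Your proof is correct, but it follows a genuinely different route than the paper's. The paper argues entirely at the level of cycle decompositions: by \Cref{prop:flowDecomposition}, $f'$ is feasible in $D'$ iff $f$ and $f'$ differ on a set of cycles in $D'_f$; by construction of $D'$ all such cycles have zero reduced cost; since every cycle of $D'_f$ is also a cycle of $D_f$, this holds iff $f$ and $f^*$ differ on zero-(reduced-)cost cycles in $D_f$, which by \Cref{prop:cost=reducedCostForCycles} is equivalent to optimality of $f^*$ --- a four-line chain of equivalences that yields both directions at once. You instead verify everything directly: feasibility of $f^*$ via capacity checks and the flow-balance bookkeeping with $b'$, cost equality via the telescoping identity $\sum_{(i,j)\in A}\overline{c}_{ij}\,g_{ij} = c(g) + \sum_{i\in V} y_i b_i$ (valid for every $b$-flow $g$), so that two $b$-flows differing only on zero-reduced-cost arcs have equal cost, and the converse via \Cref{prop:differenceOnlyIfCij=0}. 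The paper's route is shorter and connects directly to the machinery used next (\Cref{theo:cycle->optimalflow}, where proper cycles in $D'_f$ induce new optimal flows); however, it leaves implicit both the feasibility of $f^*$ in $D$ (buried in the cycle decomposition) and the precise reason why every optimal flow in $D$ restricts to a feasible flow in $D'$. Your version makes exactly these points explicit: the $b'$ cancellation shows feasibility transparently, and \Cref{prop:differenceOnlyIfCij=0} pins down that any optimal flow agrees with $f$ on $X$ and hence arises from some $f' \in \mathcal{P}'$ --- which is the form of the statement the partitioning algorithm actually relies on.
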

\begin{proof}
	
	\begin{align*}
	&f' \text{ is feasible in }D' & \\
	\iff &f \text{ and } f' \text{ differ on a set of cycles in }D'_{f} & \text{(\Cref{prop:flowDecomposition})}\\
	\iff &f \text{ and } f' \text{ differ on a set of cycles in }D'_{f} &\\
	& \text{ with reduced cost equal to zero} & \text{(Definition of $D'$)}\\
	\iff &f \text{ and } f^* \text{ differ on a set of cycles in }D_{f} &\\
	& \text{with reduced cost equal to zero} & \text{(cycles in $D'_f$ also exist in $D_f$)}\\
	\iff &f^* \text{ is optimal}& \text{(Differ on zero cycles and \Cref{prop:cost=reducedCostForCycles})}
	\end{align*}

\end{proof}

Given a feasible flow $f$ in $D'$, we can compute another feasible flow by finding a proper cycle in the residual network of $D'$. Therefore, we can determine an additional optimal flow in $D$ by finding a proper cycle in $D'_{f}$ instead of finding a proper $0$-cycle in $D_f$. This gives us the following theorem:

\begin{thm}\label{theo:cycle->optimalflow}
	Let $f$ be an optimal integer flow. Then every proper cycle in $D'_f$ induces another optimal integer flow in $D$.
\end{thm}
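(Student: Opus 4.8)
The plan is to show that a proper cycle in $D'_f$ translates directly into a proper $0$-cycle in $D_f$, so that augmenting along it produces a genuinely different optimal flow in $D$. The theorem is almost an immediate consequence of the machinery already assembled, so the proof should be short; the work lies in carefully chaining the earlier results rather than in any new construction.

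First I would let $C'$ be a proper cycle in $D'_f$. Since $D'$ was obtained from $D$ by deleting exactly the arcs in $X = \{(i,j)\in A \mid \overline{c}_{ij}\neq 0\}$, every arc of $D'$ has zero reduced cost, and hence every residual arc of $D'_f$ has zero residual reduced cost by the definition of $\overline{c}_{ij}(f)$ for forward and backward arcs. By \Cref{prop:cost=reducedCostForCycles}, the cost of $C'$ equals its reduced cost, which is a sum of zeros; thus $c(f,C')=0$ and $C'$ is a $0$-cycle. Moreover $C'\subset D'_f \subset D_f$, since every residual arc available in the reduced network is also available in the original residual network (deleting arcs only removes residual arcs, it never creates new ones). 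So $C'$ is a proper $0$-cycle in $D_f$.

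Next I would augment: set $f^{\prime} := f + \lambda\chi(C')$ for a suitable positive integer $\lambda$ (the minimum residual capacity along $C'$), which keeps $f^{\prime}$ feasible and integral. By the properness of $C'$ the augmentation changes the flow on at least one arc, so $f^{\prime}\neq f$. Viewing $f^{\prime}$ as a feasible flow in $D'$ — its flow differs from $f$ only on arcs of $C'$, all of which lie in $A'$ — I can invoke \Cref{lem:feasibleInD'=optimalInD}: since $f^{\prime}$ restricted to $D'$ is a feasible integer solution in $D'$, the extension $f^*$ (which copies the untouched values of $f$ on $X$) is an optimal integer solution in $D$. Because $C'$ uses no arc of $X$, this $f^*$ coincides with $f+\lambda\chi(C')$ on all of $A$, so the induced flow is precisely the augmented flow and it is optimal in $D$.

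The main obstacle, such as it is, is bookkeeping rather than mathematics: I must be careful that the residual arcs of $C'$ genuinely correspond to arcs whose flow can change inside $D'$, so that the augmented flow is a legitimate feasible flow in $D'$ and \Cref{lem:feasibleInD'=optimalInD} applies without modifying the values on $X$. One should also confirm that distinctness ($f^*\neq f$) really follows from properness — this is exactly the point of the \emph{proper} restriction flagged earlier in the text, which rules out the degenerate cycle $((i,j),(j,i))$ that would leave the flow unchanged. Once these two points are checked, the chain $C'$ proper in $D'_f \Rightarrow C'$ proper $0$-cycle in $D_f \Rightarrow f+\lambda\chi(C')$ feasible in $D' \Rightarrow f^*$ optimal in $D$ closes the argument.
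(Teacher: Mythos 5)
Your proof is correct and takes essentially the same route as the paper: the paper's one-line argument (``any cycle in $D'_f$ has zero cost in $D_f$''), read together with the discussion preceding the theorem and \Cref{lem:feasibleInD'=optimalInD}, is exactly the chain you spell out. Your additional care about properness guaranteeing $f^*\neq f$ and about the augmentation touching only arcs of $A'$ simply makes explicit what the paper leaves implicit.
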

\begin{proof}
	Any cycle in  $D'_f$ has zero cost in $D_f$.
\end{proof}

We now show that we can determine such a proper cycle with the following depth-first search (DFS) technique. We assume that the reader is familiar with the directed depth-first search. The book by Cormen et al.~\cite{Cormen2001} provides a good overview of this topic.  Nevertheless, we want to repeat a short definition of the different types of arcs in a DFS-forest due to their importance in the next section. 
\begin{definition}[\cite{Cormen2001}]
	\emph{Tree arcs} are the arcs in the trees of the DFS-forest. \emph{Backward arcs} are the arcs $(i,j)$ connecting a vertex $i$ to an ancestor $j$ in the same DFS-tree. Non-tree arcs $(i,j)$ are called \emph{forward arcs} if the arc connects a vertex $i$ to a descendant $j$ in the same DFS-tree. All other arcs are called \emph{cross arcs}.
\end{definition}

In order to find a proper cycle in $D^{\prime}$, let $\mathcal{F}$ be a DFS-forest of $D^{\prime}$. For each tree $\mathcal{T}\in \mathcal{F}$ and each node $i\in \mathcal{T}$, we define $\pi_i$ as the predecessor of $i\in \mathcal{T}$.
Furthermore, let $\operatorname{dfs} (i)$ denote the DFS number (compare \emph{discovery time} in~\cite{Cormen2001}) of node $i$, i.e., the number when node $i$ is visited during the DFS process.  Recall that there cannot exist a cross arc $(i,j)$ with $\operatorname{dfs} (i) < \operatorname{dfs} (j)$, i.e., a cross arc from a node $i$ that is visited before the arc's sink node $j$.  In other words, if we assume that the children of each node are drawn from left to right and the different trees of the DFS-forest are also drawn from left to right, a cross arc always goes from the right to the left. Therefore, there can be cross arcs connecting two different trees of the DFS-forest, but those arcs can never be included in a cycle.
\begin{property}[\cite{Cormen2001}]\label{popertyDFS}
	Each cycle $C$ in $D'_f$ contains at least one backward arc and no cycle can contain an arc connecting two different trees of $\mathcal{F}$, i.e., there is no arc $(i,j)\in C$ with $i\in \mathcal{T}_l$ and $j\in \mathcal{T}_k$ for $\mathcal{T}_l,\mathcal{T}_k\in \mathcal{F}$ with $k\neq l $. 
\end{property}
We call a backward arc $(i,j)$ with $\pi_i\neq j$ a \emph{long backward arc}. Otherwise, we call it a \emph{short backward arc}.
If there exists a long backward arc, we immediately have determined a proper cycle formed by the long backward arc $(i,j)$ and the distinct path from $j$ to $i$ in $\mathcal{T}$ (see~\Cref{fig1}). 
However, even without the existence of such a backward arc, there can exist proper cycles, possibly containing a forward or a cross arc.
See~\Cref{fig2} for examples. However, every proper cycle contains at least one cross, forward or long backward arc.

\tikzstyle{vertex}=[circle,fill=black,draw=black,minimum size=8pt,inner sep=0]
\tikzstyle{edge} = [draw,thick,->]
\tikzstyle{weight} = []
\tikzstyle{selected vertex} = [vertex, fill=red!24]
\tikzstyle{selected edge} = [draw,line width=5pt,-,green!50]
\usetikzlibrary{arrows,automata}
\pgfdeclarelayer{background}
\pgfsetlayers{background,main}
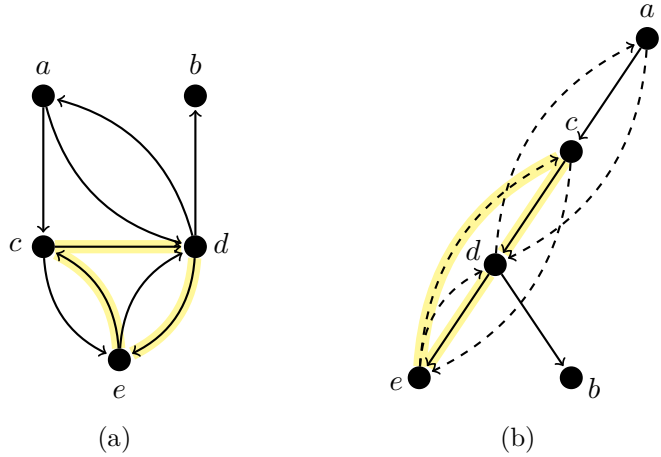
\begin{figure}
	\centering
	\begin{minipage}{0.6\textwidth}
		\subfloat[]{%
			\scalebox{1}{%
				\begin{tikzpicture}[->,shorten >=1pt,auto,node distance=2.8cm,
				semithick]

				\foreach \pos/\name/\direction/\label/\xshift/\yshift in {{(5,0)/a/above/a/0/0}, {(7,0)/b/above/b/0/0}, {(5,-2)/c/left/c/-0.15/-0.12},
					{(7,-2)/d/right/d/0.1/-0.1}, {(6,-3.5)/e/below/e/0/-0.35}}
				\node[vertex] [label={[\direction,xshift=\xshift cm,  yshift=\yshift cm] $\label$}] (\name) at \pos {};
				
				\path 
				(a) edge[->,thick]   	node[left] {} (c)
				(a) edge[->,thick,bend right]   	node[left] {} (d)
				(d) edge[->, thick, bend right]  node[right] {} (a)
				(d) edge[->,thick]   	node[right] {} (b)
				(c) edge[->,thick]   	node[below] {} (d)
				(c) edge[->,thick,bend right]   	node[left] {} (e)
				(e) edge[->,thick,bend right]   	node[left] {} (c)
				(e) edge[->,thick,bend left]   	node[right] {} (d)
				(d) edge[->,thick,bend left]   	node[] {} (e);
				
				\begin{pgfonlayer}{background}
				\path  (c) edge[draw,line width=5pt,-,yellow!50] (d)
				(d) edge[draw,line width=5pt,-,yellow!50,bend left] (e)
				(e) edge[draw,line width=5pt,-,yellow!50, bend right] (c);
				\end{pgfonlayer}

				\end{tikzpicture}}%
		}
		\hfill%
		\subfloat[]{%
			\centering
			\scalebox{1}{\begin{tikzpicture}[->,shorten >=1pt,auto,node distance=2.8cm,
				semithick]

				\node [vertex,label={$a$}] (v1) at (11.5,0) {};
				\node [vertex,label={$c$ }] (v2) at (10.5,-1.5) {};
				\node [vertex,label={[left,xshift=-0.05cm,yshift=0.05cm] $d$}] (v3) at (9.5,-3) {};
				\node [vertex,label={[left,below,xshift=-0.3cm,yshift=-0.4] $e$}] (v4) at (8.5,-4.5) {};
				\node [vertex,label={[right,below,xshift= 0.3cm,yshift=-0.4] $b$}] (v5) at (10.5,-4.5) {};
				
				\draw [edge] (v1) edge (v2);
				\draw [edge] (v2) edge (v3);
				\draw [edge] (v3) edge (v4);
				\draw [edge] (v3) edge (v5);
				\draw [->,thick,bend left,dashed] (v4) edge (v3);
				\draw [->,thick, bend left, dashed] (v4) edge (v2);
				\draw [->,thick, bend left, dashed] (v2) edge (v4);
				\draw [->,thick, bend left, dashed] (v3) edge (v1);
				\draw [->, thick, bend left, dashed] (v1) edge (v3);
				\begin{pgfonlayer}{background}
				\path  (v2) edge[draw,line width=5pt,-,yellow!50] (v3)
				(v3) edge[draw,line width=5pt,-,yellow!50] (v4)
				(v4) edge[draw,line width=5pt,-,yellow!50, bend left] (v2);
				\end{pgfonlayer}
				
				\end{tikzpicture}}
		}
	\end{minipage}
	
	\caption{ (a) The residual graph with a proper cycle marked in yellow and (b) the corresponding DFS tree with the proper backward cycle.}\label{fig1}
\end{figure}
\begin{figure}
	\centering
	\begin{minipage}{0.6\textwidth}
		\subfloat[]{
			
			\scalebox{1}{\begin{tikzpicture}[->,shorten >=1pt,auto,node distance=2.8cm,
				semithick]

				\node [vertex] (v10) at (-5.5,3) {};
				\node [vertex] (v11) at (-6,2) {};
				\node [vertex] (v12) at (-6.5,1) {};
				\node [vertex] (v13) at (-5.5,1) {};
				\node [vertex] (v15) at (-7,0) {};
				\node [vertex] (v16) at (-7.5,-1) {};
				\node [vertex] (v14) at (-6,0) {};
				\draw [edge] (v10) edge (v11);
				\draw [edge] (v11) edge (v12);
				\draw [edge] (v11) edge (v13);
				\draw [edge] (v12) edge (v14);
				\draw [edge] (v12) edge (v15);
				\draw [edge] (v15) edge (v16);
				\draw [edge,dashed, bend left] (v12) edge (v16);
				\draw [edge,dashed,bend left] (v16) edge (v15);
				\draw [edge,dashed,bend left] (v15) edge (v12);
				
				\begin{pgfonlayer}{background}
				\path  (v12) edge[draw,line width=5pt,-,yellow!50,bend left] (v16)
				(v16) edge[draw,line width=5pt,-,yellow!50,bend left] (v15)
				(v15) edge[draw,line width=5pt,-,yellow!50, bend left] (v12);
				\end{pgfonlayer}
				
				\end{tikzpicture}}
		}
		\hfill
		\subfloat[]{
			\centering
			\scalebox{1}{\begin{tikzpicture}[->,shorten >=1pt,auto,node distance=2.8cm,
				semithick]

				\node [vertex] (v1) at (-1,2) {};
				\node [vertex] (v2) at (-1.5,1) {};
				\node [vertex] (v3) at (-2,0) {};
				\node [vertex] (v4) at (-2.5,-1) {};
				\node [vertex] (v5) at (-0.5,1) {};
				\node [vertex] (v6) at (0,0) {};
				\node [vertex] (v7) at (0.5,-1) {};
				\draw [edge] (v1) edge (v2);
				\draw [edge] (v2) edge (v3);
				\draw [edge] (v3) edge (v4);
				\draw [edge] (v1) edge (v5);
				\draw [edge] (v5) edge (v6);
				\draw [edge] (v6) edge (v7);
				\draw [edge,dashed] (v5) edge (v4);
				\draw [edge,dashed, bend left] (v4) edge (v3);
				\draw [edge,dashed,bend left] (v3) edge (v2);
				\draw [edge,dashed,bend left] (v2) edge (v1);

				\begin{pgfonlayer}{background}
				\path  (v4) edge[draw,line width=5pt,-,yellow!50,bend left] (v3)
				(v5) edge[draw,line width=5pt,-,yellow!50] (v4)
				(v3) edge[draw,line width=5pt,-,yellow!50, bend left] (v2)
				(v1) edge[draw,line width=5pt,-,yellow!50] (v5)
				(v2) edge[draw,line width=5pt,-,yellow!50, bend left] (v1);
				
				\end{pgfonlayer}

				\node [vertex] (v8) at (0,3) {};
				\node [vertex] (v9) at (1,2) {};
				
				\draw [edge] (v8) edge (v1);
				\draw [edge] (v8) edge (v9);
				
				\end{tikzpicture}}
		}
	\end{minipage}
	
	\caption{Proper cycles in a tree of $\mathcal{F}$ 
		without containing a long backward arc. (a) A proper forward  cycle and (b) a proper cross cycle.}\label{fig2}
\end{figure}
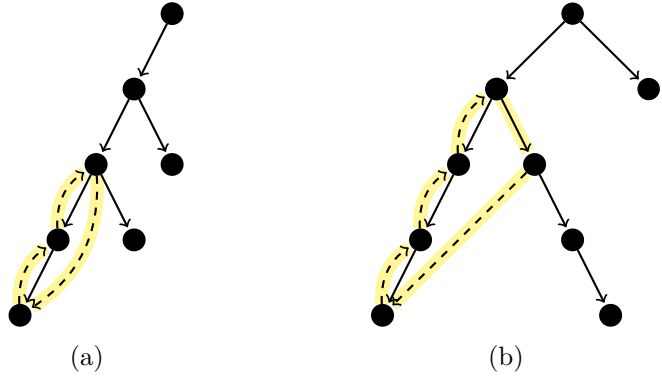

\begin{definition}
	Let $\mathcal{F}$ be a DFS-forest of a directed graph $D$. A \emph{proper backward cycle} in $D$ is a proper cycle which is formed by a long backward arc $(i,j)$  and the unique path from $j$ to $i$ in the corresponding tree $\mathcal{T}$.\\
	A \emph{proper forward cycle} in $D$ is a proper cycle formed by a forward arc $(i,j)$ and the path from $j$ to $i$ only consisting of short backward arcs.\\
	A \emph{proper cross cycle} in $D$ is a proper cycle formed by a cross arc $(i,j)$, the path from $a$ to $i$ in the tree $\mathcal{T}$, where $a$ is the first common ancestor of the nodes $i$ and $j$ in $\mathcal{T}$, and the path from $j$ to $a$ only consisting of short backward arcs.
\end{definition}

\Cref{fig1} and~\Cref{fig2} show examples of the above definitions.
In the next step, we will show that it is enough to search only for these kinds of proper cycles to detect one proper cycle.

\begin{lem}\label{Lem: IfProperThenCycle}
	If there exists a proper cycle in $D^{\prime}_f$, then there also exists a proper backward, forward or cross cycle. 
\end{lem}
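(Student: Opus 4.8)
The plan is to take an arbitrary proper cycle $C$ in $D'_f$ and either read off a proper backward cycle directly, or rebuild $C$ into a proper forward or cross cycle. By \Cref{popertyDFS}, $C$ lies in a single tree $\mathcal{T}$ of the DFS-forest and contains a backward arc. I first dispose of the long backward case: if $C$ contains a long backward arc $(i,j)$, then $(i,j)$ together with the unique tree path from $j$ down to $i$ (which lies in $D'_f$, since all tree arcs do) is a proper backward cycle, and we are done. Hence I may assume every backward arc of $C$ is short. I also record the announced remark that $C$ must then contain a forward or a cross arc: a cycle built only from tree arcs and short backward arcs would be a simple cycle whose underlying undirected edges all belong to $\mathcal{T}$, and the only such closed walks are the improper two-arc walks $(\pi_i,i),(i,\pi_i)$.

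The central structural step is to locate the ``top'' of $C$. Let $r$ be the node of $C$ with the smallest DFS number. Using the nesting of the DFS discovery/finish intervals I would show that $r$ is an ancestor of every node of $C$ and that $C$ stays inside the subtree rooted at $r$: an arc leaving this subtree would have to be a backward arc to a strict ancestor of $r$ (impossible, as $C$ has no long backward arc) or a cross arc, which always points to a node of strictly smaller DFS number and would therefore contradict the minimality of $r$. Consequently the arc of $C$ entering $r$ comes from a child of $r$, i.e.\ it is a short backward arc.

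Now I would orient $C$ so that it starts and ends at $r$ and single out its \emph{last} forward-or-cross arc $(i,j)$. Every arc of $C$ after $(i,j)$ is then a tree or a short backward arc, so the remaining piece of $C$ from $j$ to $r$ is a simple path whose undirected image lies in $\mathcal{T}$; being simple, it is the unique tree path from $j$ up to $r$ and thus consists \emph{only} of short backward arcs. This monotone climb is the engine of the construction. If $(i,j)$ is a forward arc, then $i$ is an ancestor of $j$ and lies on this climb, so $(i,j)$ together with the sub-climb from $j$ up to $i$ is a proper forward cycle. If $(i,j)$ is a cross arc with first common ancestor $a$ of $i$ and $j$, then $a$ is an ancestor of $j$ and a descendant of (or equal to) $r$, hence also lies on the climb; combining the tree path from $a$ down to $i$ (all tree arcs, hence in $D'_f$), the cross arc $(i,j)$, and the sub-climb from $j$ up to $a$ yields a proper cross cycle. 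In either case every arc used is present in $D'_f$ — the climbing arcs are inherited from $C$ and the descending tree arcs belong to $\mathcal{T}$ — so the construction never relies on a short backward arc that might be absent.

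The steps I expect to need the most care are the two structural claims rather than the final assembly: first, the DFS-interval argument that the minimum-DFS node $r$ dominates $C$ and that $C$ cannot escape its subtree; and second, the observation that the tail of $C$ after its last forward-or-cross arc is \emph{forced} to be a monotone short backward climb. The remaining bookkeeping — that the produced objects are genuinely \emph{simple} proper cycles (the tree descent to $i$ and the short backward climb from $j$ meet only at $a$, because $i$ and $j$ sit in different child-subtrees of $a$, and no arc is used together with its reverse) — is routine once these two facts are in place.
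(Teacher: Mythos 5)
Your proof is correct and follows essentially the same route as the paper's: dispose of long backward arcs first, locate the top of the cycle, take the last cross/forward arc, and show that the tail of the cycle after it is a monotone climb of short backward arcs passing through the relevant ancestor (the arc's source for a forward arc, the lowest common ancestor for a cross arc). The differences are organizational rather than substantive --- you anchor the argument at the minimum-DFS-number node and treat forward and cross arcs uniformly via the last forward-or-cross arc, whereas the paper uses the minimum-depth node, first replaces forward arcs by tree paths, and handles the purely-forward case separately; if anything, your variant is slightly more self-contained since it avoids that replacement step.
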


\begin{proof}
	Let $C$ be a proper cycle in $D^{\prime}_f$.
	We know that every cycle contains at least one backward arc and cannot contain arcs connecting different trees of the forest.
	\begin{itemize}
		\item[Case 1:] There is at least one long backward arc in $C$. In this case, we have a proper backward cycle.
		
		\item[Case 2:] There is no long backward arc in $C$, but at least one cross arc.
		To simplify, we can replace all forward arcs by tree arcs so that our cycle has only cross arcs, tree arcs and short backward arcs. 
		Now, let $l$ be the node with the least depth in $C$ and let $C$ start with node $l$. Further, let $(i,j)$ be the last used cross arc in $C$. 
		We want to show that a path made from short backward arcs from $j$ to the lowest common ancestor $a$ of $i$ and $j$ exists in the DFS-forest. Then, together with the tree arcs-path from $a$ to $i$, we have a proper cross cycle.
		Since $(i,j)$ is the last cross arc in $C$ before returning to $l$, we have a path from $j$ to $l$ only consisting of short backward arcs.
		Suppose that $a$ has less depth than $l$. It follows that $i$ is in a different sub-tree than the one rooted at $l$.
		To get from $l$ to $i$, cycle $C$ must contain 
		a cross arc leading to the sub-tree of $i$, because in $C$ we cannot change sub-trees by going up the tree above the depth of $l$.
		The definition of cross arcs gives us $\operatorname{dfs} (i) < \operatorname{dfs}(l)$ and $\operatorname{dfs} (j)< \operatorname{dfs}(i) $ and we have $\operatorname{dfs}(l) < \operatorname{dfs}(j)$ because of the backward arcs-path from $j$ to $l$ -- a contradiction.
		So $a$ must have a greater or equal depth than $l$ and therefore is reachable from $j$ by short backward arcs. 
		\item[Case 3:] The cycle $C$ contains no long backward arcs, no cross arcs but at least one forward arc. The only other kinds of arcs left are tree arcs and short backward arcs. Cycle $C$ must contain short backward arcs, especially the ones from the sink of a forward arc to its source to close the cycle. Thus, we have at least one proper forward cycle. In fact, $C$ cannot contain tree arcs since it is proper and we have no long backward arcs.
	\end{itemize}
\end{proof}

\section{The All Optimal Integer Flow Algorithm}
\label{sec_algoAllOptimal}

This section presents an algorithm that, given an initial optimal integer solution $f$ to the MCF, determines all $F$ many optimal integer flows in $\mathcal{O}(F(m+n)+mn)$ time. 
First, we specify how to find an optimal flow different from the given one. 
Then, we show how we can partition the solution space of the AOF problem in $D$, given two optimal solutions. This gives us two smaller AOF problems and, recursively, we can find all optimal integer flows.

First, we provide an algorithm to determine a second feasible flow by determining one proper cycle of the aforementioned kinds. Herefore, we define a variable \emph{$\textit{SBAlow(i)}$}
for each node $i\in V$ in the DFS-forest defined as the dfs number of the  node closest to the root reachable from $i$ only using short backward arcs. We may compute the $\textit{SBAlow}$ value analogously to the lowpoint in~\cite{HT1973} during the DFS-procedure: Whenever the DFS visits a node  $i$, we check whether there is a short backward arc $(i,\pi_i)$. If so, we set $\textit{SBAlow(i)}= \textit{SBAlow}(\pi_i)$; otherwise, $\textit{SBAlow(i)}= \text{dfs}(i)$.
With these variables, we can quickly check if a path exists that  only consists of short backward arcs from one node to another.
This will help us to decide if proper forward or cross  cycles exist. 	

We use~\Cref{Lem: IfProperThenCycle}, which guarantees that if any proper cycle exists, there is either a proper forward, backward or cross cycle. Therefore, it suffices to check for these three kinds of cycles only.
If one long backward arc exists, we have already determined a proper backward cycle. For any forward arc $(i,j)$, we test if a path from $j$ to $i$ exists only containing short backward arcs by checking if $\textit{SBAlow}(j) \le \text{dfs}(i)$. In this case, we know that there exists a path from $j$ to $i$ only consisting of short backward arcs, and hence we have determined a proper forward cycle. 
For any cross arc $(i,j)$ we have to test if $\textit{SBAlow}(j) \le \text{dfs}(a)$, where $a$ is the first common ancestor of $i$ and $j$. If so, we know that a path from $j$ to $a$ exists only containing short backward arcs. 
In this case, we have determined a proper cross cycle.
The following~\Cref{algo:AnotherFeasible}  determines a proper cycle if one exists and in this case returns a second feasible solution.

\begin{algorithm}
	\KwData{feasible flow $f$ in $D$}
	\KwResult{Another feasible flow $f^{\prime}$ with $f^{\prime}\neq f $, if one exists.}
	\BlankLine
	$D_f \leftarrow$ BuildResidualNetwork($D$)\;
	$\mathcal{F} \leftarrow$ BuildDFSForest($D_f$) \tcp*{Also determines all $\textit{SBAlow}$ values}
	Preprocess($\mathcal{F}$)\tcp*{Determine forward, backward and cross arcs}
	
	\For{ all backward arcs $(i,j)$}
	{
		\If{ $\pi_i\neq j$}
		{
			\tcp{$(i,j)$ is long backward arc}	
			\tcp{Build cycle from backward arc and $j$-$i$ tree path}			
			$C_{ij} \leftarrow$ proper backward cycle $(i,j) +j$-$i$ tree arcs path\;
			$f^{\prime} \leftarrow f+\chi({C_{ij}})$\;
			\KwRet{$f'$}\;
		}
	}
	
	\For{all forward arcs $(i,j)$}
	{
		
		\If{ $\textit{SBAlow}(j) \le $ dfs$(i)$}
		{
			\tcp{\fontdimen2\font=0.8ex Build cycle from forward arc and $j$-$i$ short backward arcs~path }
			$C_{ij} \leftarrow $ proper forward cycle $(i,j)+P_{ji}$\;
			$f^{\prime} \leftarrow f+\chi({C_{ij}})$\;			
			\KwRet{$f'$\;}
		}
		
	}

	\For{all cross arcs $(i,j)$}
	{
		$a\leftarrow$ LowestCommonAncestor$(i,j)$ \;
		\If{ $\textit{SBAlow}(j) \le \text{dfs}(a)$}
		{
			\tcp{Build cycle from cross arc and $j$-$i$ short backward arcs path}
			$C_{ij} \leftarrow $ proper cross cycle $(i,j)+P_{ja} + a\text{-}i$ tree arcs path\;
			$f^{\prime} \leftarrow f+\chi({C_{ij}})$\;			
			\KwRet{$f'$\;}
		}			
		
	} 
	\tcp{No cycle found and therefore no other flow}
	\KwRet{\Null\;}
	
	\caption{FindAnotherFeasibleFlow }\label{algo:AnotherFeasible}
\end{algorithm}

\begin{lem}\label{lem:ProofForAlgoAnotherFeasible}
	Given a feasible integer flow $f$ in $D$,~\Cref{algo:AnotherFeasible} determines another feasible integer flow different from $f$ in time $\mathcal{O}(m+n)$ or decides that no such flow exists. 
\end{lem}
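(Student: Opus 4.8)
The plan is to verify correctness and then bound the running time by a careful accounting of the three loops. For correctness, the key observation is that \Cref{Lem: IfProperThenCycle} guarantees that a proper cycle exists in $D'_f$ if and only if a proper backward, forward, or cross cycle exists. Thus \Cref{algo:AnotherFeasible}, which exhaustively checks for exactly these three cycle types, finds a proper cycle whenever one exists and correctly returns \Null{} only when none does. Once a proper cycle $C$ is found, augmenting by $\lambda=1$ produces $f' = f + \chi(C)$; since $C$ is proper, no pair of antiparallel residual arcs cancels, so $f' \neq f$, and by \Cref{theo:cycle->optimalflow} (and the fact that any cycle in $D'_f$ has zero cost in $D_f$) the augmentation keeps $f'$ feasible. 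I would state these correctness facts first, leaning on the earlier lemmas, so that the bulk of the proof can focus on the timing.

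For the running-time bound, I would proceed line by line. Building the residual network $D_f$ takes $\mathcal{O}(m+n)$, and a single DFS pass builds the forest $\mathcal{F}$, assigns every node its $\operatorname{dfs}$ number, classifies each arc as tree/forward/backward/cross, and (as described in the text, analogously to the lowpoint computation of~\cite{HT1973}) computes every $\textit{SBAlow}(i)$ value, all in $\mathcal{O}(m+n)$. The backward-arc loop tests $\pi_i \neq j$ in constant time per arc, and when it succeeds, extracting the tree path from $j$ to $i$ and forming the cycle costs $\mathcal{O}(n)$; this happens at most once before returning. The forward-arc loop similarly spends constant time per arc on the test $\textit{SBAlow}(j) \le \operatorname{dfs}(i)$, and at most one successful iteration costs $\mathcal{O}(n)$ to build the short-backward-arc path $P_{ji}$.

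The main obstacle is the cross-arc loop, because it invokes LowestCommonAncestor$(i,j)$, and a naive LCA query is not constant-time. Here I would argue that the total cost of all LCA queries is $\mathcal{O}(m+n)$ rather than $\mathcal{O}(m\cdot n)$: after an $\mathcal{O}(n)$ preprocessing of the DFS-forest, each LCA query can be answered in $\mathcal{O}(1)$ (the standard $\langle \mathcal{O}(n),\mathcal{O}(1)\rangle$ LCA data structure), so the whole loop runs in $\mathcal{O}(m)$ for the tests plus a single $\mathcal{O}(n)$ cycle construction on the one successful iteration. Combining the three loops with the $\mathcal{O}(m+n)$ setup, every step is $\mathcal{O}(m+n)$ and each loop returns after at most one successful match, so the total is $\mathcal{O}(m+n)$. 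I expect the referee-sensitive point to be justifying that the cross-cycle path from $j$ to the common ancestor $a$ is genuinely realizable by short backward arcs exactly when $\textit{SBAlow}(j) \le \operatorname{dfs}(a)$; this is the bridge between the combinatorial guarantee of \Cref{Lem: IfProperThenCycle} and the constant-time test used in the algorithm, and I would cite the definition of $\textit{SBAlow}$ to close it.
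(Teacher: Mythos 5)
Your proposal is correct and takes essentially the same approach as the paper's proof: correctness is delegated to \Cref{theo:cycle->optimalflow} and \Cref{Lem: IfProperThenCycle}, and the $\mathcal{O}(m+n)$ running time follows from constant-time per-arc tests---with lowest common ancestor queries answered in $\mathcal{O}(1)$ after linear preprocessing~\cite{HarelT84}---plus a single $\mathcal{O}(n)$ cycle construction before returning. The point you flag as delicate (that $\textit{SBAlow}(j)\le \operatorname{dfs}(a)$ certifies the short-backward-arc path) is handled in the paper's discussion preceding \Cref{algo:AnotherFeasible} rather than inside the proof itself, so no gap remains.
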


\begin{proof}
	
	The algorithm finds one proper backward, forward or cross cycle, if one exists, and therewith constructs a feasible flow distinct from $f$. The correctness follows from~\Cref{theo:cycle->optimalflow} and~\Cref{Lem: IfProperThenCycle}.
	
	We can create the residual network and a DFS-forest in time $\mathcal{O}(m+n)$.
	Determining the kinds of arcs (tree, forward, backward or cross arcs) can also be accomplished in  $\mathcal{O}(m+n)$ time:
	During the computation of the DFS-forest, we mark all tree arcs. If for a non-tree arc $(i,j)$ we have $\operatorname{dfs}(i) <  \operatorname{dfs} (j)$, then $(i,j)$ is a forward arc. Otherwise, we check the lowest common ancestor of $i$ and $j$. If it is $j$, we have a backward arc, otherwise $(i,j)$ is a cross arc. 
	The lowest common ancestor queries can be answered in constant time provided a linear preprocessing time~\cite{HarelT84}. The determination of the $\textit{SBAlow}$ values can also be accomplished during the DFS-procedure in $\mathcal{O}(m+n)$ time.
	
	Let $m= m_1 \thinspace \dot{\cup} \thinspace m_2 \thinspace \dot{\cup} \thinspace m_3 \thinspace \dot{\cup} \thinspace m_4$ be a partition of $m$, where $m_1$ are all tree arcs, $m_2$ all backward arcs, $m_3$ all forward arcs and $m_4$ all cross arcs in the DFS forest. 
	We test for all backward arcs $(i,j)$ in $\mathcal{O}(1)$ time whether $\pi_i \neq j$. In this case, we encounter a long backward arc and we need additional $\mathcal{O}(n)$ time to create the proper backward cycle $C_{ij}$ and adapt the flow. So, we need $\mathcal{O}(m_2)$ time for all backward arcs.
	For a forward arc, we also test in time $\mathcal{O}(1)$ if $\textit{SBAlow}(j) \le \text{dfs}(i)$. Overall, we need $\mathcal{O}(m_3)$ time for all forward arcs. As discussed before, the construction of a possible proper forward cycle and the returned flow take $\mathcal{O}(n)$ time.
	We also  check if $\textit{SBAlow} \le \text{dfs}(a)$ in $\mathcal{O}(1)$ time for cross edges. The lowest common ancestor $a$ is determined in $\mathcal{O}(1)$ time. Again, a proper cross cycle and the  flow $f'$ can be created in $\mathcal{O}(n)$ time.
	Altogether, we have a run time of $\mathcal{O}(m_2) +\mathcal{O}(m_3) + \mathcal{O}(m_4) + \mathcal{O}(n)$. The last term corresponds to the construction of $C_{ij}$ and $f'$.
	Thus, the algorithm determines another feasible integer flow in $\mathcal{O}(n+m)$ or decides that none exists. \qedhere
	
\end{proof}

However, there is no need to first build the DFS-forest and then test if a proper cycle exists for the different arc sets. We may  also integrate all checks during the DFS-procedure, ending with a probably faster algorithm in practice. We choose the presented formulation of~\Cref{algo:AnotherFeasible} for better readability.

Next we describe how we use~\Cref{algo:AnotherFeasible} to determine another optimal flow in a network.
Due to~\Cref{lem:feasibleInD'=optimalInD}, any feasible flow in the reduced network $D'$ is an optimal flow in $D$. Therefore, we only need to reduce the network and compute a second feasible flow to determine another optimal flow different from an initial optimal flow. \Cref{algo:AnotherOptimal} follows this procedure.

\begin{algorithm}[t]
	\KwData{optimal flow $f$ in $D$ and the reduced costs $\overline{c}$}
	\KwResult{Another optimal flow $f^{*}$ with $f^{*}\neq f $, if one exists.}
	\BlankLine
	\tcp{Reduce network}
	$D' \leftarrow D$\;
	$A'$ $\leftarrow$ $\{(i,j) \in A \;|\; \overline{c}_{ij} = 0\}$\;
	$b' \leftarrow$ AdaptDemandVector($f$, $D'$)\;
	\BlankLine
	$f' \leftarrow$ FindAnotherFeasibleFlow($f$, $D'$)\;
	\lIf{f' = \Null}{Stop}
	\For{all arcs $a\in D$}
	{\lIf{$a \in A'$}{$f^*(a) \leftarrow f'(a) $}
		\lElse{$f^*(a) \leftarrow f(a) $}
	}	
	\KwRet{$f^*$\;}
	\caption{FindAnotherOptimalFlow}\label{algo:AnotherOptimal}
\end{algorithm}

\begin{lem} \label{lem: RuntimeAnotherOptimalFlow}
	Given an initial optimal integer flow $f$ in $D$ and the reduced costs $\overline{c}$,~\Cref{algo:AnotherOptimal} determines another optimal integer flow different from $f$ in $\mathcal{O}(m+n)$ time or decides that no such solution exists.
\end{lem}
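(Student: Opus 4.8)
The plan is to split the statement into a correctness claim and a running-time claim, and to reduce each to the facts already established for the reduced network $D'$. The whole argument rests on the equivalence between feasible flows in $D'$ and optimal flows in $D$ furnished by \Cref{lem:feasibleInD'=optimalInD}, together with the guarantee of \Cref{lem:ProofForAlgoAnotherFeasible} that \textsc{FindAnotherFeasibleFlow} returns, in $\mathcal{O}(m+n)$ time, a feasible flow $f'$ in $D'$ distinct from $f$ whenever one exists, and reports nothing otherwise. Since $f$ is optimal in $D$, the hypotheses of \Cref{lem:feasibleInD'=optimalInD} hold throughout.

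For correctness I would argue the two branches of \Cref{algo:AnotherOptimal} separately. First, suppose \textsc{FindAnotherFeasibleFlow} returns a flow $f'\neq f$ in $D'$. Then $f^*$ is exactly the flow defined before \Cref{lem:feasibleInD'=optimalInD}, so that lemma makes $f^*$ optimal in $D$; moreover $f'$ and $f$ differ on some arc of $A'$, and since $f^*$ copies $f'$ on $A'$, the flow $f^*$ differs from $f$ on that same arc, whence $f^*\neq f$. Conversely, suppose the call returns nothing, so $f$ is the only feasible flow in $D'$, and assume for contradiction that some optimal flow $g\neq f$ exists in $D$. By \Cref{prop:differenceOnlyIfCij=0}, $g$ coincides with $f$ on every arc of $A\setminus A'$, so $g$ equals the flow $f^*$ obtained from its own restriction $f':=g$ to $D'$; the reverse direction of \Cref{lem:feasibleInD'=optimalInD} then shows that $f'$ is feasible in $D'$, and $f'\neq f$ because $g\neq f$. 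This contradicts the uniqueness of $f$ in $D'$, so the algorithm correctly reports that no other optimal flow exists.

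For the running time I would account for each step of \Cref{algo:AnotherOptimal}. Copying $D$ into $D'$ and extracting $A'=\{(i,j)\in A \mid \overline{c}_{ij}=0\}$ costs $\mathcal{O}(m+n)$, being a single pass over the arcs using the precomputed reduced costs $\overline{c}$. Computing the adapted balances $b'_i = b_i - \sum_{(i,j)\in X} f_{ij} + \sum_{(j,i)\in X} f_{ji}$, with $X:=A\setminus A'$, is again one pass over the arcs and hence $\mathcal{O}(m+n)$. The call to \textsc{FindAnotherFeasibleFlow} runs in $\mathcal{O}(m+n)$ by \Cref{lem:ProofForAlgoAnotherFeasible}, and the final loop assembling $f^*$ from $f'$ and $f$ is one more pass over the arcs. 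Summing these $\mathcal{O}(m+n)$ contributions yields the claimed bound.

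The step I expect to be the most delicate is the negative branch of the correctness argument: it is not enough to know that $f'$ feasible in $D'$ implies $f^*$ optimal in $D$; I also need that \emph{every} optimal flow in $D$ arises this way, i.e.\ that no optimal flow can differ from $f$ on an arc outside $A'$. This onto-ness is precisely what \Cref{prop:differenceOnlyIfCij=0} supplies, so the subtlety lies in combining that property with the reverse implication of \Cref{lem:feasibleInD'=optimalInD} to certify that the absence of a second feasible flow in $D'$ really does exclude a second optimal flow in $D$, rather than merely failing to construct one.
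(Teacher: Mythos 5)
Your proof is correct and follows essentially the same route as the paper: correctness is reduced to \Cref{lem:feasibleInD'=optimalInD}, and the running time to \Cref{lem:ProofForAlgoAnotherFeasible} plus $\mathcal{O}(m+n)$ passes for the network reduction and the assembly of $f^*$. The only difference is that you make explicit the negative branch --- invoking \Cref{prop:differenceOnlyIfCij=0} to show that every optimal flow of $D$ arises from a feasible flow of $D'$, so that a null answer really certifies uniqueness --- a step the paper leaves implicit in the ``vice versa'' of \Cref{lem:feasibleInD'=optimalInD}.
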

\begin{proof}
	The algorithm reduces the network $D$ following the description from the paragraph prior to~\Cref{lem:feasibleInD'=optimalInD}.  Thereafter, it computes another feasible flow $f'$ in the reduced network given an optimal flow $f$ in $D$, which is naturally feasible in $D'$.
	Then, $f'$ can be transformed into another optimal flow $f^*$ in $D$.
	The correctness of~\Cref{algo:AnotherOptimal} follows from~\Cref{lem:feasibleInD'=optimalInD}.

	We can reduce the network in $\mathcal{O}(m)$ time, and due to~\Cref{lem:ProofForAlgoAnotherFeasible}, determine another feasible flow  in the reduced network in $\mathcal{O}(m^{\prime}+n)$ time using the initial flow $f$ as input. Here, $m'$ denotes the number of arcs in $D'$. Therefore, we can find another optimal solution in time $\mathcal{O}(m+n)$ or decide that none exists.
\end{proof}

Since we can compute a second optimal flow in $D$, we can now apply the binary partition approach as used in~\cite{Hamacher1995} or~\cite{Sedeno2013}. 
Let $\mathcal{P}$ be the set of optimal flows in $D$, which we also call the \emph{optimal solution space}, and let $f_1,f_2 \in \mathcal{P}$.~\Cref{algo:FindAllOptimalFlows} determines all optimal flows, by recursively dividing $\mathcal{P}$ into $\mathcal{P}_1$ and $\mathcal{P}_2$ such that $f_1$ and $f_2$ are optimal flows in $\mathcal{P}_1$ and $\mathcal{P}_2$, respectively, and then seeks more flows in $\mathcal{P}_1$ and $\mathcal{P}_2$. 
This can easily be done by identifying some arc $(i,j)$ with $f_{1,ij}\neq f_{2,ij}$. Such an arc must exist since $f_1\neq f_2$. 
Assume w.l.o.g. that $f_{1,ij} < f_{2,ij}$. We set  
$$\mathcal{P}_1:=\{z\in \mathcal{P}: z_{ij}\le f_{1,ij}\}$$ and $$\mathcal{P}_2:=\{z\in \mathcal{P}: z_{ij}\ge f_{1,ij}+1\}.$$

\begin{algorithm}[t]
	\KwData{optimal flow $f$ in $D$ and the reduced costs $\overline{c}$}
	\KwResult{All optimal integer flows in $D$}
	\BlankLine
	$f^* \leftarrow$ FindAnotherOptimalFlow($f$, $\overline{c}$, $D$)\;
	\lIf{$f^*=$ \Null}{Stop}
	\BlankLine
	Print($f^*$)\;
	\tcp{Partition solution space and find new optimal flow}
	$a \leftarrow$ arc $a$ with $f(a) \neq f^*(a)$\;
	\eIf{$f(a) < f^*(a)$}
	{FindAllOptimalFlows($f$, $\overline{c}$, $D$) with $u_a=f(a)$\;
		FindAllOptimalFlows($f^*$, $\overline{c}$, $D$) with $l_a=f(a)+1$\;
	}
	{FindAllOptimalFlows($f$, $\overline{c}$, $D$) with $l_a=f(a)$\;
		FindAllOptimalFlows($f^*$, $\overline{c}$, $D$) with $u_a=f(a)-1$\;
	}	
	\caption{FindAllOptimalFlows}\label{algo:FindAllOptimalFlows}
\end{algorithm}

This partitioning implies that each of the new minimum cost flow problems is defined in a modified flow network with an altered upper or lower capacity of a single arc $(i,j)$. By dividing the new solution spaces again until no more other optimal flow exists, all optimal solutions are found, whose number we denote by $F$.  
Our main result follows:

\begin{thm}\label{them: allOptimalFlows}
	
	Given an initial optimal integer solution $f$ and the reduced costs $\overline{c}$,~\Cref{algo:FindAllOptimalFlows}  solves the AOF problem in $\mathcal{O}(F(m+n))$ time, where $F$ is the number of all optimal integer flows.
\end{thm}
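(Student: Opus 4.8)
The plan is to analyze the recursion tree induced by \Cref{algo:FindAllOptimalFlows} and to establish correctness, non-duplication, and the running time from its shape. The first thing I would pin down is an invariant: the flow $g$ passed into any recursive call is optimal both for the current capacity-restricted subproblem and for the original network $D$ with the fixed potentials $y$, so that the reduced costs $\overline{c}$ supplied to every recursive call remain valid. The key observation is that each step only tightens a \emph{single} bound, setting $u_a=g_a$ or $l_a=g_a+1$ (and symmetrically). For any flow $g$ that stays feasible under such a tightening, this can only \emph{remove} residual arcs from $D_g$; hence the residual network of the subproblem is an arc-subset of $D_g$, and $\overline{c}_{ij}(g)\ge 0$ survives on it by \Cref{prop:complSlack}. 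This is precisely the hypothesis that \Cref{lem:feasibleInD'=optimalInD} and \Cref{algo:AnotherOptimal} need, so reusing $\overline{c}$ unchanged is justified and \Cref{lem: RuntimeAnotherOptimalFlow} applies verbatim to every subproblem.

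Next I would check the partition. Let $\mathcal{P}$ denote the optimal solution space of the current subproblem and $g$ its input flow; the two recursive calls split $\mathcal{P}$ into $\{z\in\mathcal{P}:z_a\le g_a\}$ and $\{z\in\mathcal{P}:z_a\ge g_a+1\}$ (or the symmetric pair). Since all flows are integral, these sets are disjoint and cover $\mathcal{P}$; moreover $g$ lies in the first part and the newly discovered $f^*$ in the second, so both parts are nonempty, and each equals the optimal solution space of the correspondingly capacity-restricted problem because every member still has the minimum cost. As each part is strictly smaller than $\mathcal{P}$, the recursion terminates, and a straightforward induction shows it enumerates all of $\mathcal{P}$.

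The quantitative heart of the argument is counting nodes. Every internal node issues exactly two recursive calls, while every leaf is a call on which \texttt{FindAnotherOptimalFlow} reports that no further optimal flow exists, i.e.\ a subproblem whose optimal space is the singleton $\{g\}$. Since the leaf subspaces partition $\mathcal{P}$ into singletons, there are exactly $F$ leaves and therefore $F-1$ internal nodes, for $2F-1$ nodes in total. To rule out printing any flow twice, I would use the bijection sending each internal node $v$ to the leftmost leaf of its right subtree: the flow $f^*$ discovered at $v$ is handed to the right child and then carried unchanged down the left spine, so it equals the unique flow of that leaf, whereas the global leftmost leaf carries the initial $f$. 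Because the leaf flows are pairwise distinct, the $F-1$ printed flows are exactly $\mathcal{P}\setminus\{f\}$, and together with the initial $f$ they account for all $F$ optimal flows.

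Finally, each of the $2F-1$ nodes runs \texttt{FindAnotherOptimalFlow} in $\mathcal{O}(m+n)$ time by \Cref{lem: RuntimeAnotherOptimalFlow}, and the remaining per-node bookkeeping (printing $f^*$, locating an arc with $f_a\neq f^*_a$, and adjusting one capacity bound) is also $\mathcal{O}(m+n)$; multiplying yields the claimed $\mathcal{O}(F(m+n))$ bound. I expect the main obstacle to be the first step, namely arguing rigorously that the single fixed potential vector $y$ (and hence $\overline{c}$) stays optimal after an arbitrary chain of one-sided capacity tightenings; once the residual-arc-subset property is nailed down, the rest is a clean induction over the recursion tree.
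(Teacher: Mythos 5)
Your proposal is correct and follows essentially the same route as the paper's proof: correctness via the disjoint binary partition of the optimal solution space, an $\mathcal{O}(F)$ bound on the number of recursive calls, and $\mathcal{O}(m+n)$ work per call via \Cref{lem: RuntimeAnotherOptimalFlow}. Your accounting (exactly $2F-1$ calls, namely $F$ leaves with singleton solution spaces and $F-1$ internal nodes, plus the leftmost-leaf bijection) and your residual-subgraph argument for why the fixed reduced costs remain valid under one-sided capacity tightenings are simply more detailed versions of the paper's ``at most three calls per optimal flow'' count and its one-line remark that the reduced costs need not be recomputed.
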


\begin{proof}
	In each recursive call of~\Cref{algo:FindAllOptimalFlows}, a new optimal integer flow  different from the input one is found if one exists. 
	Now let $\mathcal{P}_i$ be the current optimal solution space in an algorithm call and let $\mathcal{P}_j$ and $\mathcal{P}_k$ be the optimal solution spaces of the two following calls.
	Every optimal flow is found since $\mathcal{P}_i = \mathcal{P}_j \cup \mathcal{P}_k$ and no flow is found twice because $\mathcal{P}_j \cap \mathcal{P}_k = \emptyset$.
	
	We will now show that we need $\mathcal{O}(F)$ calls of~\Cref{algo:FindAllOptimalFlows} to find all optimal integer flows. We associate each algorithm call with one optimal flow. 
	If we find another optimal solution, we associate this call with the newly found flow. Otherwise, we associate the call with the input flow.
	Because each flow induces exactly two more calls, there are at most three calls associated with one optimal flow. 
	The primary time effort of~\Cref{algo:FindAllOptimalFlows} is the computation of another optimal flow (\Cref{algo:AnotherOptimal}), which takes $\mathcal{O}(m+n)$ time. 
	We do not have to determine the reduced cost again, since  changing the flow value on arcs with zero reduced cost maintains their reduced costs.
	So in total, we solve the AOF problem in $\mathcal{O}(F(m+n))$  time.
\end{proof}

\subsection{Remarks}
The initial optimal integer flow may be obtained by using, for example, the network simplex algorithm or the enhanced capacity scaling algorithm (\cite{Ahuja1993},\cite{Orlin1993}). 
The latter determines an optimal flow in $\mathcal{O}((m\log n)(m+n\log n))$, i.e., in strongly polynomial time~\cite{Orlin1993}.

Since we can compute the node potential as described in~\Cref{sec:preliminaries} by a shortest path problem in $D_f$, possibly
containing arcs with negative cost, we need $\mathcal{O}(mn)$ time to determine the nodes' potentials. 
Given these potentials, we can determine the reduced costs in $\mathcal{O}(m)$ time. Then, an algorithm for the determination of all $F$ many integer flows needs $\mathcal{O}(F(m+n)+mn)$ time. 

Using the network-simplex algorithm yields a so-called basic feasible solution in which an optimal node potential can be obtained in $\mathcal{O}(n)$ time (see ~\cite{Ahuja1993}). Then, an algorithm for the determination of all $F$ many integer flows reduces to $\mathcal{O}(F(m+n))$ time.  
Sedeño-Noda and Espino-Martín introduce techniques in~\cite{Sedeno2013}  to reduce the memory space of the partitioning algorithm. With these techniques, it is possible to implement~\Cref{algo:FindAllOptimalFlows} to run in space $\mathcal{O}(F+m)$. 
Furthermore, it is not necessary to reduce the graph at each call of~\Cref{algo:AnotherOptimal}. 
Instead, it is sufficient to reduce the network $D$ once in the beginning and to determine all feasible solutions in the same network $D^{\prime}$ because arcs with zero reduced cost maintain their reduced costs. 
However, this does not affect the asymptotical running time. 
We shall use this implementation of~\Cref{algo:AnotherOptimal} in the next chapter.

The algorithm may also be used to determine all circulations, all feasible flows and all $s$-$t$ flows with given value $\phi$ since these flow problems may be transformed into each other.

\section{Improved running time for the $K$ best flow problem}
\label{improvmentKBest}

In the present section, we will improve the $K$ best flow algorithm by Hamacher~\cite{Hamacher1995} and derive a new one with running time $\mathcal{O}(Kn^3)$ or $\mathcal{O}(K(n^2\log n +mn))$ for sparse graphs, instead of $\mathcal{O}(K(mn\log n +m^2))$.

\begin{definition}[ \cite{Sedeno2013} ]
	\emph{The $K$ best integer flow (KBF)} problem consists of determining the $K$ best integer solutions of the MCF problem. In other words, identifying $K$ different integer flows $f_k$ with $k\in \{1,\dots, K \}$ such that $c(f_1)\le c(f_2)\le \dots \le c(f_k)$ and there is no other integer flow $f_p\neq f_k$ with $k\in \{1,\dots, K\}$ with $c(f_p)<c(f_k)$. 
\end{definition}

As in Hamacher's algorithm~\cite{Hamacher1995}, we want to determine a second optimal flow by determining a proper cycle of minimal cost. 
Using results from the previous sections, we can find this cycle more efficiently.
Recall that we obtain a second-best flow by augmenting an optimal flow over a proper minimal cycle with one flow unit. After this, we will use the binary partition approach, as presented in~\cite{Hamacher1995} or~\cite{Sedeno2013}. 

Hamacher introduces an idea to improve the average complexity of his algorithm:
Let $\overline{D}=(\overline{d}_{ji})$ be the distance table of $D_f$ concerning $\overline{c}(f)$, i.e., $\overline{d}_{ji}$ is the length of the shortest path $P_{ji}$ from $j$ to $i$ in $D_f$ with length $\overline{c}(f,P_{ji})$. 
The distance table may be computed in time $\mathcal{O}(n^3)$ using the Floyd-Warshall algorithm or in time $\mathcal{O}(n^2\log n+mn)$ by (essentially) repeated calls to Dijkstra's algorithm; the latter is more efficient on sparse graphs.

\begin{definition}[\cite{Hamacher1995}]
	We define a \emph{proper $(i,j)$-cycle } $C\in D_f$  as a proper cycle that contains the arc $(i,j)$. In addition, a proper  $(i,j)$-cycle $C\in D_f$ is called \emph{minimal} if it has minimal cost overall proper ($i,j)$-cycles. 
\end{definition}

Hamacher shows the following property for proper minimal $(i,j)$-cycles:

\begin{property}[\cite{Hamacher1995}]
	\label{prop:(i,j)min-cylce}
	For any arc $(i,j)\in A_f$ with  $(j,i)\notin A_f$, i.e., for any arc with no anti-parallel arc in $A_f$, the cost of a proper minimal $(i,j)$-cycle $C\in D_f$ is given by $c(f,C)=\overline{c}_{ij}(f)+\overline{d}_{ji}$.
\end{property} 

When using Hamacher's idea, the problem is that it only applies to arcs with no anti-parallel arc in $A_f$. For all other arcs, the cost of the corresponding proper minimal $(i,j)$-cycle has to be computed by finding a shortest path in $D_f \setminus \{(j,i)\}$.
In the following result, we show that we can use the  previously obtained~\Cref{algo:FindAllOptimalFlows} and the complementary slackness optimality conditions of an optimal solution to overcome this problem and restrict ourselves to consider only arcs with no anti-parallel arcs in $D_f$.

\begin{algorithm}[t]
	\KwData{optimal flow $f$ in $D$}
	\KwResult{A second-best flow $f^{\prime}$ with $f^{\prime}\neq f $, if one exists.}
	\BlankLine
	
	\BlankLine
	$y$ $\leftarrow$ ComputeNodePotential($f$, $D$)\;
	$\overline{c}$ $\leftarrow$ ComputeReducedCost($y$, $D$)\;
	$f' \leftarrow$ FindAnotherOptimalFlow($f$, $\overline{c}$, $D$)\;
	\If{$f' = \Null $}{
		$(\overline{D},P) \leftarrow$ DetermineDistanceTableAndPaths($\overline{c},f,D$)\;
		$\overline{A} \leftarrow \{(i,j)\in D_f \mid  (i,j)\in D \text{ with }  f_{ij}=l_{ij} \text{ or } (j,i)\in D \text{ with }  f_{ji}=u_{ji}\}$\;
		$C=\arg\min\{c(f,C)=\overline{c}_{ij} + \overline{d}_{ji} \mid  C=\{(i,j)\}\cup P_{ji} \text{ with } (i,j)\in \overline{A}\} $\;
		\lIf{$C = \Null$} {Stop}
		$f'\leftarrow f + \chi(C)$\;} 
	\BlankLine
	\KwRet{$f'$\;}
	\caption{FindSecondBestFlow}\label{algo:SecondBest}
\end{algorithm}

\begin{lem}\label{lem: runtimeSecondBest}
	Given an initial optimal integer flow $f$,~\Cref{algo:SecondBest}  determines a second-best flow in $\mathcal{O}(n^3)$ time or decides that no such flow exists.
\end{lem}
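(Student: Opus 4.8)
The plan is to split the argument into a correctness part and a running-time part, the conceptual heart being a complementary-slackness argument that justifies restricting attention to the arc set $\overline{A}$.

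First I would dispose of the easy case. If the call to FindAnotherOptimalFlow returns a flow $f'\neq f$, then by \Cref{lem: RuntimeAnotherOptimalFlow} this $f'$ is again optimal, so $c(f')=c(f)$ and $f'$ is trivially a second-best flow; the algorithm returns it and we are done. So the substance lies in the branch where no second optimal flow exists. The correctness of FindAnotherOptimalFlow guarantees that $D_f$ then contains no proper $0$-cycle, and since $f$ is optimal, every proper cycle has strictly positive cost. Using the flow decomposition (\Cref{prop:flowDecomposition}) I would argue that any feasible $g\neq f$ satisfies $c(g)-c(f)=\sum_j \lambda_j c(f,C_j)$ with $\lambda_j\ge 1$ and $c(f,C_j)\ge0$, so the cheapest strictly positive increase is attained by augmenting $f$ by one unit along a single proper cycle of minimum positive cost. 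Hence a second-best flow is obtained precisely by augmenting $f$ along a minimum-cost proper cycle, and it remains to show the algorithm finds one.

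The key step, which I expect to be the main obstacle, is to prove that it suffices to search over the restricted set $\overline{A}$. I would first verify that $\overline{A}$ is exactly the set of residual arcs $(i,j)\in A_f$ with no anti-parallel arc $(j,i)\in A_f$: an original forward arc lies in $A_f$ without its reverse iff $f_{ij}=l_{ij}$, and a backward residual arc lies in $A_f$ without its reverse iff $f_{ji}=u_{ji}$, which are exactly the two cases in the definition of $\overline{A}$. Next, using optimality of $y$ and \Cref{prop:complSlack}, for any arc $(i,j)\in A_f\setminus\overline{A}$ both $\overline{c}_{ij}(f)\ge0$ and $\overline{c}_{ji}(f)=-\overline{c}_{ij}(f)\ge0$, forcing $\overline{c}_{ij}(f)=0$; thus every arc of strictly positive reduced cost lies in $\overline{A}$. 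Since by \Cref{prop:cost=reducedCostForCycles} the cost of a proper cycle equals the sum of its nonnegative reduced costs, a proper cycle of positive cost must contain at least one arc of $\overline{A}$. Letting $C^*$ be a minimum-cost proper cycle and $(i_0,j_0)\in\overline{A}$ an arc it contains, $C^*$ is a proper $(i_0,j_0)$-cycle, so by \Cref{prop:(i,j)min-cylce} its cost is at least $\overline{c}_{i_0j_0}(f)+\overline{d}_{j_0i_0}$; conversely each quantity $\overline{c}_{ij}(f)+\overline{d}_{ji}$ with $(i,j)\in\overline{A}$ is the cost of an actual proper cycle and hence at least $c(f,C^*)$. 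Taking the minimum over $\overline{A}$ therefore returns exactly $c(f,C^*)$, so the cycle $C=\{(i,j)\}\cup P_{ji}$ selected by the algorithm achieves the minimum positive proper-cycle cost.

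To finish correctness I would check that the selected $C$ is a proper cycle that can be augmented: because $\overline{c}(f)\ge0$ the shortest path $P_{ji}$ may be taken simple, and since $(j,i)\notin A_f$ for $(i,j)\in\overline{A}$ the arc $(i,j)$ neither coincides with nor reverses an arc of $P_{ji}$, so $C$ is proper; augmenting by one unit is feasible because an integer flow has integer, hence at least unit, residual capacities. If instead no arc of $\overline{A}$ admits such a cycle (the $\arg\min$ is empty), then there is no proper cycle at all, $f$ is the unique feasible flow, and no second-best flow exists, which the algorithm reports by stopping. For the running time, computing $y$ by shortest paths in $D_f$ costs $\mathcal{O}(mn)$, the reduced costs $\mathcal{O}(m)$, and FindAnotherOptimalFlow $\mathcal{O}(m+n)$ by \Cref{lem: RuntimeAnotherOptimalFlow}; the distance table and paths cost $\mathcal{O}(n^3)$ via Floyd--Warshall (or $\mathcal{O}(n^2\log n+mn)$ via $n$ runs of Dijkstra, valid since $\overline{c}(f)\ge0$), while building $\overline{A}$, taking the minimum over it, and constructing and augmenting along $C$ cost $\mathcal{O}(m+n)$. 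As $m=\mathcal{O}(n^2)$, the total is $\mathcal{O}(n^3)$.
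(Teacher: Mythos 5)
Your proof is correct and follows essentially the same route as the paper's: dispose of the case where another optimal flow exists via FindAnotherOptimalFlow, then use complementary slackness (\Cref{prop:complSlack}) to show a minimum-cost proper cycle must contain an arc of $\overline{A}$, so that \Cref{prop:(i,j)min-cylce} and the distance table yield it in $\mathcal{O}(n^3)$ total time. You additionally fill in details the paper leaves implicit (that $\overline{A}$ is exactly the set of residual arcs without anti-parallel residual arcs, the two-sided argument that the $\arg\min$ attains the minimum proper-cycle cost, and properness of the constructed cycle), which only strengthens the argument.
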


\begin{proof}
	
	Let $f$ be an optimal integer solution. Assume that there exists another optimal solution $f^{\prime}$ in $D$. Then $D_f$ contains at least one proper $0$-cycle. Since we can compute one proper $0$-cycle with~\Cref{algo:AnotherOptimal} by~\Cref{lem: RuntimeAnotherOptimalFlow}, we can determine a second-best integer flow.
	
	So assume that there exists no other optimal flow different from $f$.  Let $C$ be a minimal proper cycle. No proper $0$-cycle  can exist in $D_f$ due to  \Cref{theo:cycle->optimalflow}. 
	So $\overline{c}(C,f)>0$, because
	there cannot be a cycle with negative costs due to the negative cycle optimality conditions. 
	Therefore, there exists an arc $(i,j)\in C$ with $\overline{c}_{ij}(f)>0$. Since $f$ is an optimal solution, the complementary slackness optimality conditions give us that $\overline{c}_{ij}(f)> 0$ only holds if $(i,j)\in D $ and $f_{ij}=l_{ij}$ or $(j,i)\in D$ and $f_{ji}=u_{ji}$. 
	Therefore, we have that $(i,j)\in \overline{A}:=\{(i,j)\in D_f \mid  (i,j)\in D \text{ with }  f_{ij}=l_{ij} \text{ or } (j,i)\in D \text{ with }  f_{ji}=u_{ji}\}$ and $(j,i)\notin D_f$, (since the flow value of the corresponding arc of $(i,j)$ in $D$ is equal to the upper or lower capacity of this arc). 
	Since $(j,i)\notin D_f$ it holds that $\overline{c}(C,f)= \overline{c}_{ij}+ \overline{d}_{ji}$ due to \color{blue}Property\color{black}~\ref{prop:(i,j)min-cylce}. So~\Cref{algo:SecondBest} finds a minimal proper cycle and hence a second-best optimal flow. 
	
	Now for the run time.	
	Computing the node potentials by solving a shortest path problem in $D_f$ as described in~\Cref{sec:preliminaries} takes $\mathcal{O}(mn)$ time because of possible negative arc cost.
	Given these potentials, we can determine the reduced costs in $\mathcal{O}(m)$ time. If there exists a proper $0$-cycle, we can determine one in time $\mathcal{O}(m+n)$ by~\Cref{lem: RuntimeAnotherOptimalFlow} or decide that no proper $0$-cycle exists.
	After that, we compute the distance table in $\mathcal{O}(n^3)$ time.  Notice that this approach also provides the paths $P_{ji}$ in addition to the cost $\overline{d}_{ji}$.
	The minimal proper cycle is given by $\arg\min\{c(f,C)\mid C=\{(i,j)\}\cup P_{ji} \mid (i,j)\in \overline{A} \} $.
	Given the distances $\overline{d}_{ji}$ and the reduced costs $\overline{c}_{ij}(f)$ for all $(i,j)$ in $D_f$ we can determine $c(f,C)=\overline{c}_{ij} + \overline{d}_{ji}$ in $\mathcal{O}(1)$ and can determine the $\arg\min$ in $\mathcal{O}(m)$ time.

	As a result, we can compute a minimal proper cycle and, therefore a second-best integer flow in time $\mathcal{O}(n^3)$.
\end{proof}

\begin{algorithm}[t]
	\KwData{optimal flow $f$ in $D$}
	\KwResult{$K$ best flows }
	\BlankLine
	
	$k \leftarrow 1$\;
	$f' \leftarrow$ FindSecondBestFlow($f$, $D$)\;
	\tcp{Insert $f$ and $f'$ in binary heap with $c(f')$ as key}
	\lIf{$f'$ $\neq$ \Null}{
		$B \leftarrow B \cup \{(f,f',D)\}$}
	\While{$B\neq \emptyset $ and $k \leq K$ }{
		$k++$\;
		$(f^1_k,f^2_k, D_k) \leftarrow ExtractMin(B)$\;
		$Print(f^2_k)$\;
		\lIf{$k=K$}{\Break}
		\tcp{Partition solution space by adjusting upper and lower bounds so that $f^1_k$ is optimal flow in $D^1_k$ and $f^2_k$ is optimal flow in $D^2_k$ (compare~\Cref{algo:FindAllOptimalFlows})}
		$\{D^1_k,D^2_k\}\leftarrow Partition(f^1_k,f^2_k,D_k)$\; 
		$f'\leftarrow $ findSecondBestFlow$(f^1_k,D^1_k)$\;
		\lIf{$f'$ $\neq$ \Null}{$B\leftarrow B\cup \{f^1_k,f',D^1_k\}$}
		$f''\leftarrow $ findSecondBestFlow$(f^2_k,D^2_k)$\;
		\If{$f''$ $\neq$ \Null}{$B\leftarrow B\cup \{f^2_k,f'',D^2_k\}$}
	}
	\BlankLine
	\caption{FindKBestFlows}\label{algo:KBest_v2}
\end{algorithm}

Since we are able to find a second-best flow in $\mathcal{O}(n^3)$ time and by using the binary partition approach as in~\cite{Hamacher1995} or~\cite{Sedeno2013},  we obtain the following result:

\begin{thm}
	Given an optimal integer flow $f$,~\Cref{algo:KBest_v2} determines the $K$ best integer solutions for a given MCF problem in time $\mathcal{O}(Kn^3)$. 
\end{thm}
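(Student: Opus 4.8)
The plan is to combine the binary partition framework established for the all-optimal-flow algorithm (\Cref{algo:FindAllOptimalFlows}) with the $\mathcal{O}(n^3)$ second-best-flow subroutine of~\Cref{lem: runtimeSecondBest}, and then bound the total number of second-best-flow computations by $\mathcal{O}(K)$. First I would argue correctness: by induction on the iteration count, I claim that at the start of each while-loop iteration the binary heap $B$ contains, for each not-yet-output solution space, a triple $(f^1,f^2,D')$ where $f^1$ is optimal in the restricted network $D'$ and $f^2$ is a genuine second-best flow in $D'$ with key $c(f^2)$. The partition step splits the current space $D_k$ into $D^1_k$ and $D^2_k$ by fixing a capacity bound on an arc where $f^1_k$ and $f^2_k$ differ, exactly as in~\Cref{algo:FindAllOptimalFlows}; this guarantees $D^1_k$ and $D^2_k$ together cover all remaining flows of $D_k$ and are disjoint, so no flow is enumerated twice and none is lost.

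Next I would establish that the flow output in the $k$-th iteration is the $k$-th best overall. The heap is keyed by the cost of the stored second-best flow, so \texttt{ExtractMin} always returns the cheapest candidate among all currently open solution spaces. Since $f^1_k$ (already output in an earlier round as some $f^2$) has cost no greater than the extracted $c(f^2_k)$, and every unexplored flow in any open space has cost at least that of the corresponding stored second-best flow, the extracted $f^2_k$ has minimum cost among all not-yet-printed feasible flows. A standard exchange argument over the heap invariant then shows the sequence of printed flows is non-decreasing in cost and that the first $K$ printed are precisely the $K$ best, which matches the KBF definition.

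Finally I would bound the running time. Each while-loop iteration extracts one flow, performs one \texttt{Partition}, and calls \texttt{FindSecondBestFlow} at most twice; the loop runs at most $K$ times, so there are $\mathcal{O}(K)$ calls to~\Cref{algo:SecondBest}, each costing $\mathcal{O}(n^3)$ by~\Cref{lem: runtimeSecondBest}. The \texttt{Partition} operation only adjusts one capacity bound and is dominated by this cost, and the heap operations on a structure of size $\mathcal{O}(K)$ cost $\mathcal{O}(\log K)$ each, which is absorbed into $\mathcal{O}(n^3)$. Summing gives $\mathcal{O}(Kn^3)$ plus the one-time cost $M$ of computing the initial optimal flow, yielding the claimed bound. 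I expect the main obstacle to be the correctness argument rather than the timing: specifically, verifying rigorously that restricting the network by capacity bounds preserves the property that \texttt{FindSecondBestFlow} returns a true second-best flow \emph{within the restricted space} (so that the reduced costs and the complementary-slackness reasoning of~\Cref{lem: runtimeSecondBest} still apply after arcs have been constrained), and that the disjoint-cover invariant interacts correctly with the heap ordering so that the global $k$-th-best guarantee is never violated.
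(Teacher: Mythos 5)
Your proposal is correct and follows essentially the same route as the paper: the while loop runs $\mathcal{O}(K)$ times, each iteration makes at most two calls to \texttt{FindSecondBestFlow} at $\mathcal{O}(n^3)$ each (\Cref{lem: runtimeSecondBest}), heap operations are dominated by this cost, and correctness rests on the disjoint-cover property of the binary partition. The only difference is one of emphasis: the paper's proof is terser, delegating the uniqueness/ordering argument to the cited partitioning approach of Hamacher and Sedeño-Noda--Espino-Mart\'in, whereas you spell out the heap invariant and the $k$-th-best extraction argument explicitly (including the observation that optimality of the stored flow within each restricted network is what keeps \Cref{lem: runtimeSecondBest} applicable), which the paper leaves implicit.
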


\begin{proof}
	The while loop iterates at most $K-1$ times.
	Due to~\Cref{lem: runtimeSecondBest}, the second-best flow determination takes $\mathcal{O}(n^3)$ time in each iteration. 
	The binary heap insertion and the extraction operations require $\mathcal{O}(m\log m)$ time~\cite{Sedeno2013}.
	So the algorithm needs $\mathcal{O}(Kn^3)$ time overall. 
	The uniqueness of any flow follows from the used partitioning approach~\cite{Hamacher2007}.
\end{proof}

\section{Bounds on the number of feasible and optimal flows}
\label{UpperBound}

In this section, we will introduce further theoretical results. For better readability, we moved some of the proofs to the appendix since the ideas and statements might also be intuitively comprehensible.
We then use these results to give an upper and a lower bound for the number of all optimal and of all feasible integer flows. The main characteristics of the obtained bounds  bases on the fact that any optimal solution can be obtained by an initial optimal \emph{tree-solution} plus a conical combination of all incidence vectors of the \emph{zero cost induced cycles} with bounded coefficients.  Again, we only consider integer flows for the remaining of this paper.

We start with some well-known notations.
Let us call the set of all feasible flows in $D$ the \emph{flow polyhedron} $P(D,l,u,b).$
A feasible flow $f$ is a \emph{basic feasible} solution if it corresponds to a vertex of the flow polyhedron $P(D,l,u,b)$.
Any basic feasible solution $f$ of (MCF) corresponds to a (not necessarily unique) spanning-tree structure $(T,L,U)$. For simplicity, we shall abbreviate \emph{spanning tree} as \emph{tree} in the sequel and refer to a tree $T$ when we really mean its set of arcs.
\begin{definition}
	We call $f$ and an associated \emph{tree structure} $(T,L,U)$ a \emph{tree solution} if it consists of  a spanning tree $T$ of $D$ and a disjoint set $A\backslash T= L \, \dot{\cup} \, U$ such that the flow $f$ satisfies
	$$\begin{aligned} &f_{ij} = l_{ij}  & & \text { for all } (i,j) \in L, \\ &f_{ij} =u_{ij} & & \text { for all } (i,j) \in U. \end{aligned}$$ 
	
\end{definition}

If a feasible (optimal) flow exists, then there also exists a feasible (optimal) tree solution~\cite{Cook1998}, respectively. 
The network simplex algorithm~\cite{Dantzig1951} always determines an optimal tree solution~\cite{Ahuja1993}. We assume w.l.o.g that $l_{ij} = 0$ for all $(i,j)\in A$ (see~\cite{Ahuja1993}).

We stated above that a flow $f$ is optimal if and only if $D_f$ does not contain a negative cycle and
a tree solution helps us find cycles in $D_f$ quickly:
Indeed, if we have a tree solution $f$ with an associated tree structure $(T,L,U)$, it suffices to check cycles that arise by inserting a single edge of $D\setminus T$ into $T$ in order to prove optimality of $f$, see~\cite{Cook1998} for details. 
This check can be further simplified if we consider a node potential, i.e., a vector $y \in \mathbb{R}^V$, where $y_v$ is defined to be the cost of the (unique) simple undirected path in $T$ from an arbitrary but fixed node $r$ to $v$.  
Here, the cost of an arc $(i,j) \in D$ traversed in forward direction by the shortest path contributes positively to the total cost, whereas the cost of a backwards-traversed arc contributes negatively.
Then, the following property holds:

\begin{property}[\cite{Ahuja1993}]
	A tree solution $f$ with an associated tree structure $(T,L,U)$ is optimal if 
	$$ \begin{aligned}
	&(i) & & \overline{c}_{ij} = 0 & & \text{ for all } (i,j)\in T, \\
	&(ii) & & \overline{c}_{ij} \ge 0 & & \text{ for all } (i,j)\in L, \\
	&(iii) & &  \overline{c}_{ij} \le 0 & & \text{ for all } (i,j)\in U.
	\end{aligned}$$
\end{property}

These three conditions are equivalent to $\overline{c}_{ij}(f) \geq 0$ for all $(i,j) \in D_f$.
The equation $(i)$ is always true by our definition of the node potential $y$. Together with conditions $(ii)$ and $(iii)$, it follows that any cycle in $D_f$ has a non-negative cost and hence $f$ is an optimal flow.

The overall advantage of this kind of node potential is that we can construct cycles by inserting one single edge to $T$ with  some nice properties, which we will introduce below.

\begin{definition}[\cite{Cook1998}]\label{def:Ca}
	Let $(i,j) \notin T$ be a non-tree arc.
	\begin{itemize}
		\item[(i)]	There exists a \emph{unique} cycle $C(T, L, U, (i,j))$  \emph{induced by }$(i,j)$, that is formed by $(i,j)$ together with the $j-i$ path in $T$,  designated with $P^T_{j,i}$. This cycle is from now on referred to  as $C_{ij}$.
		\item [(ii)]  The arc $(i,j)$ defines the orientation of the cycle $C_{ij}$. If $(i,j)\in L$, then the orientation of the cycle is in the same direction as $(i,j)$. If otherwise $(i,j)\in U$, then the cycle's orientation is in the opposite direction. We define the set of all arcs $(u,v)$ in $C_{ij}$ that are in the same direction as the orientation of the cycle with $\overrightarrow{C_{ij}}$ and the set of all arcs $(u,v)$ that are opposite directed with $\overleftarrow{C_{ij}}$.
	\end{itemize}
	See~\Cref{fig:uniqueCycleCij2} for an illustration. 
\end{definition}

\begin{figure}
	
	\centering
	\tikzstyle{vertex}=[circle,fill=black,draw=black,minimum size=8pt,inner sep=0]
	\tikzstyle{edge} = [draw,thick,->]
	\tikzstyle{weight} = []
	\tikzstyle{selected vertex} = [vertex, fill=red!24]
	\tikzstyle{selected edge} = [draw,line width=5pt,-,green!50]
	\usetikzlibrary{arrows,automata}
	\pgfdeclarelayer{background}
	\pgfsetlayers{background,main}
	\begin{tikzpicture}[->,shorten >=1pt,auto,node distance=2.8cm,
	semithick]
	\foreach \pos/\name/\label in {{(-3.5,-1)/r/r},{(0,0)/2/}, {(2,0)/3/i}, {(0,-2)/4/},
		{(-1.5,-1)/1/},{(-4.5,-2)/6/},{(4,-1)/7/}}
	\node[vertex] (\name)[label=$\label$] at \pos {};
	\node[vertex] (5)[label={[yshift=-0.8cm] $j$}] at (2,-2) {}; 
	
	\path (1) edge[<-,thick]   	node[above] {$e_3$} (2)
	(2) edge[->,thick]   	node[above] {$e_4$} (3)
	(3) edge[->,thick,dashed]   	node[right,yshift=0.2cm] { }  node[right]{$e_0$}(5)
	(4) edge[->,thick]   	node[below] {$e_1$} (5)
	(4) edge[->,thick]   	node[below] {$e_2$} (1)
	(r) edge[->,thick]   	node[below] {} (1)
	(6) edge[<-,thick]   	node[below] {} (r)
	(7) edge[->,thick]   	node[below] {} (5);
	
	\begin{pgfonlayer}{background}
	\path  (1) edge[draw,line width=5pt,-,yellow!50] (4)
	(3) edge[draw,line width=5pt,-,yellow!50] (5)
	(2) edge[draw,line width=5pt,-,yellow!50] (3)
	(5) edge[draw,line width=5pt,-,yellow!50] (4)
	(1) edge[draw,line width=5pt,-,yellow!50] (2);
	\end{pgfonlayer}

	\draw[<-,thick] ([shift=(3:1mm)]1.2,-1) arc (0:310:5mm);

	\end{tikzpicture}
	\caption{Example of the unique cycle $C_{e_0}$ induced by $e_0$. Non-tree arcs are drawn as dashed lines. 
		In this example  $e_0\in L$  and therefore 
		$\protect\overrightarrow{C_{e_0}} = \{e_0, e_2, e_4\}$
		and		
		$\protect\overleftarrow{C_{e_0}} =  \{e_1, e_3\}$.
	}
	\label{fig:uniqueCycleCij2}
\end{figure}
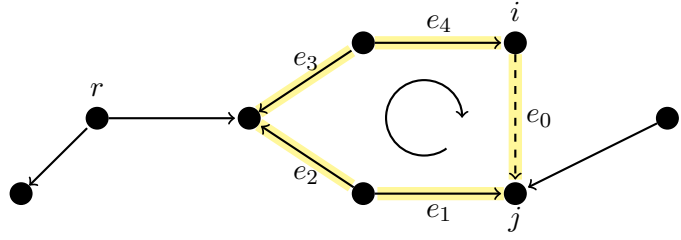

\begin{property}[\cite{Cook1998}]\label{prop:CycleCost=ReducedCost}
	Given a tree structure $(T, L, U)$, the unique cycle $C_{ij}$ induced by an arc $(i,j) \not\in T$ satisfies
	\begin{itemize}
		\item For all arcs $(u,v)\in C_{ij}$ it holds that $(u,v) \in T\cup (i,j)$.
		\item The cost of $C_{ij}$ is given by $c(C_{ij})=\overline{c}_{ij}$ if $(i,j)\in L$ and $c(C_{ij})=-\overline{c}_{ij}$ if $(i,j)\in U$.
	\end{itemize}
\end{property} 

The general idea for the upper and lower bound is based on the fact that we can represent any undirected cycle $C_D$ in $ D $ as a composition  of some unique cycles $ C_{ij} $ induced by  arcs $ (i,j) \in C_D \setminus T $. 
This result is well-known and the base for several algorithms for finding a so-called fundamental set of cycles of a graph, see, e.g.,~\cite{Welch66}.
Here, we define a composition of two sets of arcs as the symmetric difference of the two sets $$ A \circ B = (A \cup B) \backslash (A \cap B). $$

\begin{thm} \label{AnyCycleFrominducedCycles}
	Each undirected cycle $ C_D$ in $D$ can be represented by a composition  of the unique induced cycles $$ C_{ij} \text{ for all } (i,j)\in C_D \setminus T. $$ 
\end{thm}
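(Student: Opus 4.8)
The plan is to prove the statement by a parity argument in the cycle space over $\mathbb{F}_2$, where the composition $\circ$ (the symmetric difference of arc sets) is precisely vector addition. Throughout I treat each cycle as its set of arcs, since $\circ$ only depends on arc sets, and I read the degree conditions off the underlying undirected multigraph.

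The single structural input I would isolate first is already supplied by \Cref{prop:CycleCost=ReducedCost}: every induced cycle $C_{i'j'}$ consists only of arcs in $T \cup \{(i',j')\}$. Consequently a non-tree arc $e$ is contained in the induced cycle $C_{e'}$ if and only if $e = e'$; that is, each non-tree arc lies in exactly one induced cycle, namely its own. This is the fact that makes the parity bookkeeping work.

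Next I would write $C_D \setminus T = \{e_1, \dots, e_k\}$ for the non-tree arcs of the given cycle and set $S := C_{e_1} \circ \cdots \circ C_{e_k}$. An arc belongs to $S$ exactly when it lies in an odd number of the $C_{e_t}$. By the fact above, each $e_t$ lies in precisely one of these cycles (its own) and every other non-tree arc lies in none of them, so $S$ and $C_D$ have exactly the same set of non-tree arcs. Therefore the composition $R := C_D \circ S$ contains no non-tree arcs at all, i.e., $R \subseteq T$.

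Finally I would invoke two standard facts to close the argument. Each undirected cycle induces a subgraph in which every vertex has even degree, and a symmetric difference of such even subgraphs is again even; hence $R$, being a composition of cycles, is an even subgraph. But $R \subseteq T$, and a forest contains no nonempty even subgraph, since any nonempty even subgraph must contain a cycle while $T$ is acyclic. Thus $R = \emptyset$, which is exactly $C_D = S = C_{e_1} \circ \cdots \circ C_{e_k}$, the claimed representation. The main obstacle is carrying out these two halves cleanly while juggling directed arcs against the undirected cycle $C_D$: I must make sure the parity step really forces all non-tree arcs to cancel in $R$, and that the even-subgraph-in-a-forest step genuinely forces the tree-only remainder $R$ to vanish. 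Everything else, in particular that each $C_{ij}$ uses only tree arcs plus its defining arc, comes for free from \Cref{prop:CycleCost=ReducedCost}.
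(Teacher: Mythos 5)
Your proof is correct, but it takes a genuinely different route from the paper's. The paper argues by explicit algebraic manipulation: it decomposes $C_D$ into its non-tree arcs $a_1,\dots,a_t$ and the (possibly empty) tree paths between consecutive ones, expands each induced cycle as $C_{a_k} = a_k \circ P^T_{j_k,i_k}$, splits every tree path through the root via $P^T_{j_k,i_k} = P^T_{j_k,r}\circ P^T_{r,i_k}$, and then regroups using commutativity of $\circ$ and the identity $P^T_{j_k,r}\circ P^T_{r,i_{k+1}} = P^T_{j_k,i_{k+1}}$ until the composition visibly reassembles $C_D$. You instead run the classical ``fundamental cycles form a basis of the cycle space'' argument over $\mathbb{F}_2$: since each non-tree arc lies in exactly one induced cycle (which you correctly extract from \Cref{prop:CycleCost=ReducedCost}), the composition $S$ and $C_D$ have identical non-tree arcs, so the remainder $C_D \circ S$ is an even-degree subgraph contained in the acyclic graph $T$, hence empty. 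Both proofs are complete for the statement as given, and yours is shorter and free of path bookkeeping. What the paper's heavier computation buys is reuse: the identical chain of identities is repeated in the appendix with signed incidence vectors to prove \Cref{lem:DirectedComposition}, the directed refinement of this theorem, whereas your parity argument discards orientation information (everything is mod $2$) and so would need to be redone over $\mathbb{Z}$ --- checking that tree arcs are traversed in consistent directions rather than merely cancelling in pairs --- to yield that signed version.
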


The proof of~\Cref{AnyCycleFrominducedCycles} can be found in the appendix.

Any cycle $C\in D_f$ yields an undirected cycle $C_D$ which can be represented by a composition of the unique  cycles  $ C_{ij} \text{ for all } {i,j}\in C_D \setminus T $.
We can extend this result by showing that the incidence vector and the cost are closed under this composition. For simplicity, we will denote an arc $(i,j) \notin T$ by $a$ in the following.

\begin{lem}{\label{lem:DirectedComposition}}
	For any cycle $C\in D_f$ let $C_D$ be the undirected equivalent cycle in $D$. Then, it holds that:
	$$ \chi(C) = \sum_{a\in C_D\setminus T} \chi(C_a) \qquad\text{ and } \qquad c(C,f) = \sum_{a\in C_D\setminus T} c(C_a),$$
	where $\chi(C_a)\in \{-1,0,1\}^A$ is defined by 
	\begin{align*}
	\chi_{ij}(C_{a}) := \begin{cases}
	1, &\text{if $(i,j)\in \overrightarrow{C_{a}},$ }\\
	-1, &\text{if $(i,j)\in \overleftarrow{C_{a}},$}\\
	0, &\text{otherwise}\\
	\end{cases} && \text{for all $(i,j) \in A$.}
	\end{align*}
\end{lem}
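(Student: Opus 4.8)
The plan is to leverage \Cref{AnyCycleFrominducedCycles}, which already establishes that the \emph{undirected} cycle $C_D$ is the composition $\bigcirc_{a\in C_D\setminus T} C_a$ of the induced cycles, and to promote this set-level identity to an identity of signed incidence vectors. The key observation is that for two directed cycles whose undirected versions share no common arc, the composition (symmetric difference) of their arc sets corresponds exactly to the sum of their incidence vectors, provided the orientations are consistent. So my first step would be to fix the orientation of $C$ and to check that each induced cycle $C_a$ inherits a compatible orientation from $C$: I would verify that for the distinguished arc $a=(i,j)\in C_D\setminus T$, the sign $\chi_{ij}(C_a)$ (which is $+1$ since $a\in\overrightarrow{C_a}$ by definition when $a\in L$, and $-1$ when $a\in U$, matching \Cref{def:Ca}(ii)) agrees with the direction in which $C$ traverses $a$.

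Second, I would argue arc-by-arc that the signed sum $\sum_{a\in C_D\setminus T}\chi(C_a)$ equals $\chi(C)$. Each $C_a$ consists only of tree arcs plus the single non-tree arc $a$ (\Cref{prop:CycleCost=ReducedCost}), so the non-tree arcs of $D$ contribute to exactly one summand each: for a non-tree arc $a'\in C_D\setminus T$ the only term touching it is $\chi(C_{a'})$, giving precisely $\chi_{a'}(C)$ after the orientation check above. The real work is in the tree arcs: a given tree arc $e\in T$ may lie on several induced cycles $C_a$, and I must show the signed contributions telescope so that $\sum_a \chi_e(C_a)=\chi_e(C)$, i.e.\ equals $\pm1$ if $e\in C_D$ and $0$ otherwise. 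This is exactly the content of \Cref{AnyCycleFrominducedCycles} at the level of sets (cancellation via symmetric difference); promoting it to signs amounts to checking that whenever a tree arc cancels between two cycles $C_{a_1},C_{a_2}$, the two cycles traverse it in \emph{opposite} directions, and whenever it survives it is traversed consistently with $C$. I would establish this by orienting every induced cycle coherently along a fixed traversal of $C_D$, so that the symmetric-difference cancellations of \Cref{AnyCycleFrominducedCycles} become sign cancellations.

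Third, the cost identity follows almost immediately once the incidence-vector identity is in hand. Since $c(C,f)=\sum_{(i,j)\in A}c_{ij}(f)\,\chi_{ij}(C)$ and cost is linear in the incidence vector, I can simply write
\begin{align*}
c(C,f) = \sum_{(i,j)\in A} c_{ij}(f)\,\chi_{ij}(C)
= \sum_{(i,j)\in A} c_{ij}(f)\sum_{a\in C_D\setminus T}\chi_{ij}(C_a)
= \sum_{a\in C_D\setminus T} c(C_a),
\end{align*}
using \Cref{prop:cost=reducedCostForCycles} to identify the inner sum over a single induced cycle with its cost $c(C_a)$.

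The main obstacle will be the orientation/sign bookkeeping in the second step: \Cref{AnyCycleFrominducedCycles} is purely a statement about unordered arc sets, and the subtlety is that the symmetric difference forgets all directional information, so I cannot blindly sum incidence vectors without first pinning down consistent orientations. Concretely, I must rule out the bad case where two induced cycles share a tree arc and traverse it in the \emph{same} direction (which would leave a spurious $\pm2$ rather than cancelling). I expect to handle this by choosing a global orientation of $C_D$ once and for all and defining the sign of each $C_a$ relative to its defining arc $a$ as in \Cref{def:Ca}(ii); a short induction on $|C_D\setminus T|$, peeling off one non-tree arc at a time and invoking the symmetric-difference structure of \Cref{AnyCycleFrominducedCycles}, should then force the signs to align correctly.
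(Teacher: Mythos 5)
Your steps 1 and 3 coincide with the paper's proof: the paper likewise begins by observing that a cycle $C\subset D_f$ must traverse each non-tree arc $a\in L$ forward and each $a\in U$ backward (only that residual arc exists), so $\chi_a(C)=\chi_a(C_a)$ by \Cref{def:Ca}(ii), and it likewise gets the cost identity by linearity. The genuine gap is in your step 2. Your mechanism for the tree-arc bookkeeping --- pairwise cancellation, plus ``ruling out the bad case where two induced cycles share a tree arc and traverse it in the same direction'' --- rests on a false claim, and the induction built on it breaks. Concretely, take the spanning tree with arcs $e=(p,q)$, $(p_1,p_3)$, $(p_3,p)$, $(p_2,p)$, $(q,q_1)$, $(q_2,q)$, $(q_3,q_2)$, and non-tree arcs $a_1=(q_1,p_1)$, $a_2=(q_2,p_2)$, $a_3=(p_3,q_3)$, all in $L$. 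Then $C_{a_1}$ (tree path $p_1\to p_3\to p\to q\to q_1$) and $C_{a_2}$ (tree path $p_2\to p\to q\to q_2$) both traverse $e$ \emph{forward}, while $C_{a_3}$ (tree path $q_3\to q_2\to q\to p\to p_3$) traverses $e$ backward; and $C_D=\bigl(a_1,(p_1,p_3),a_3,(q_3,q_2),a_2,(p_2,p),e,(q,q_1)\bigr)$ is a simple cycle through all three non-tree arcs, realizable in $D_f$ once the tree arcs have slack. So the ``bad case'' cannot be ruled out: it occurs in legitimate instances, and the identity on $e$ holds only because the three signed contributions $+1+1-1$ sum to $+1$, not by pairwise opposite-direction cancellation. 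The same example kills the naive induction: peeling off $a_3$ first leaves $\chi(C_D)-\chi(C_{a_3})$ with entry $2$ on $e$, which is not the incidence vector of any cycle or edge-disjoint union of cycles, so no induction hypothesis applies to it. (A lucky peeling order happens to work here, but your plan specifies no rule for the order and no invariant that survives an arbitrary choice.)

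What is missing is an argument that handles arbitrary overlaps of the induced cycles simultaneously, and this is precisely where the paper's proof does its work. The paper routes every tree path through the root: writing $C_D=\{a_1,P^T_{j_1,i_2},a_2,\dots,a_t,P^T_{j_t,i_1}\}$ and $\chi(C_{a_k})=\chi^{a_k}(C_{a_k})+\chi(P^T_{j_k,i_k})$, it uses $\chi(P^T_{j_k,i_k})=\chi(P^T_{j_k,r})+\chi(P^T_{r,i_k})$ (the overlapping sub-paths toward the root are traversed in opposite directions and cancel at the incidence-vector level), rearranges, and recombines $\chi(P^T_{j_k,r})+\chi(P^T_{r,i_{k+1}})=\chi(P^T_{j_k,i_{k+1}})$, telescoping directly to $\chi(C_D)$; no case analysis on shared tree arcs is ever needed. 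Alternatively, your own step 1 admits a short finish that avoids all bookkeeping: the vector $v:=\chi(C)-\sum_{a\in C_D\setminus T}\chi(C_a)$ is a difference of circulations, hence a circulation; by step 1 it vanishes on every non-tree arc, so its support lies in $T$; since the support of a nonzero circulation must contain a cycle, which is impossible inside a tree, $v=0$. Either route would repair your step 2; as proposed, the central step does not go through.
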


Again, the proof of~\Cref{lem:DirectedComposition} can be found in the appendix. 
Now we want to show that we can express any optimal flow by an initial optimal tree solution and a conical combination of certain $C_a$.
Let $f$ be an optimal tree solution. \Cref{prop:flowDecomposition} gives us that any feasible integer flow $f'$ can be written as $f' =f + \sum_{i=1}^k \mu_i \chi(C_i)$ for some $C_i\in D_f$ and integers $\mu_i>0$. \Cref{lem:DirectedComposition} yields
$$f' = f + \sum_{i = 1}^k \mu_i \chi(C_i)  = f + \sum_{i = 1}^k \mu_i \sum_{a \in C_{i}\setminus T}\chi(C_a). $$ 
Rearranging the sums and summarizing the cycles and coefficients yields
$$ f' = f + \sum_{a \notin T} \lambda_a \chi(C_a).$$

Here, each cycle $ C_{a} $ is unique and different from the other induced cycles $ C_{a^{\prime}} $ with $a \neq a^{\prime} $. 
Because the cycle $C_a$ only consists of tree arcs and the arc $ a \notin T $ (\Cref{prop:CycleCost=ReducedCost}), $a$ is not included in any of the other induced cycles $ C_{a^{\prime}} $ with $ a \neq a^{\prime} $. 
Hence, since $f'$ is a feasible flow, it must hold that $0\le\lambda_a\le u_a$.
Otherwise, the flow value $f_a$ of the arc $a$ would not satisfy its lower or upper boundary. 
This yields a maximum of $(u_a+1)$ different values for $\lambda_a$. 

Suppose now that $f'$ is an optimal integer solution, then it holds that $c(f')=c(f)$. Again by~\Cref{prop:flowDecomposition} and~\Cref{lem:DirectedComposition} we can show analogously that
$$c(f')=c(f) + \sum_{a \notin T} \lambda_a c(C_a).$$ 
Due to the optimality of $f$ it holds that $c(C_a)\ge 0$ and therefore we have $\lambda_a=0$  for all arcs $a\notin T$ with $c(C_a)>0$. Let the set $S$ consist of all non-tree arcs with zero reduced cost, i.e.,  $$S = \{a\notin T \, | \, \overline{c}_a = 0 \}.$$
Then with~\Cref{prop:CycleCost=ReducedCost}, we can reformulate the above equation for any other optimal integer flow $f'$ and an initial optimal tree solution $f$ as  $$ f' = f + \sum_{a \in S} \lambda_a \chi(C_a).$$

We formalize the above argumentation in~\Cref{theo:optimalFCompositionOfCa} in the appendix.
Moreover, we can show the following:
\begin{lem}\label{lem318}
	Let $f$ be an optimal tree solution, $f^1=f+\sum_{a\in S} \lambda^1_a \chi({C_a})$ and $f^{2}=f+\sum_{a\in S}^k \lambda^{2}_a\chi({C_a})$. If  $\lambda^1\neq \lambda^{2}$ then it holds  $f^1\neq f^{2}$. 
\end{lem}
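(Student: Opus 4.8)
The plan is to prove the contrapositive: assume $f^1 = f^2$ and show that this forces $\lambda^1 = \lambda^2$. Writing out both representations and subtracting, the equality $f^1 = f^2$ becomes
$$\sum_{a \in S} \lambda^1_a \chi(C_a) = \sum_{a \in S} \lambda^2_a \chi(C_a),$$
which rearranges to $\sum_{a \in S}(\lambda^1_a - \lambda^2_a)\chi(C_a) = 0$. The goal is then to conclude that every coefficient $\lambda^1_a - \lambda^2_a$ vanishes, i.e. that the incidence vectors $\{\chi(C_a)\}_{a \in S}$ are linearly independent over the integers (indeed over $\mathbb{R}$).

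The key observation, which I would make the heart of the argument, is the same structural fact already exploited just before this lemma in the derivation of $f' = f + \sum_{a \in S}\lambda_a \chi(C_a)$: by \Cref{prop:CycleCost=ReducedCost}, the induced cycle $C_a$ consists only of tree arcs together with the single non-tree arc $a$. Consequently, for a fixed non-tree arc $a \in S$, the $a$-component of $\chi(C_a)$ is nonzero (it equals $\pm 1$ since $a \in \overrightarrow{C_a}$ or $a \in \overleftarrow{C_a}$), whereas the $a$-component of $\chi(C_{a'})$ is zero for every other non-tree arc $a' \neq a$, because $a$ appears in no induced cycle other than $C_a$. I would phrase this by projecting the identity $\sum_{a \in S}(\lambda^1_a - \lambda^2_a)\chi(C_a) = 0$ onto the coordinate indexed by a fixed arc $a \in S$: every term in the sum except the $a$-th contributes $0$ to that coordinate, so the equation in that coordinate reads $(\lambda^1_a - \lambda^2_a)\chi_a(C_a) = 0$, and since $\chi_a(C_a) = \pm 1 \neq 0$ we obtain $\lambda^1_a = \lambda^2_a$. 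As $a \in S$ was arbitrary, $\lambda^1 = \lambda^2$, establishing the contrapositive and hence the lemma.

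The only real subtlety — and the step I would be most careful about — is justifying that $a$ is contained in no induced cycle $C_{a'}$ for $a' \neq a$; this is precisely the disjointness of the non-tree arcs across the fundamental cycles and follows directly from the first bullet of \Cref{prop:CycleCost=ReducedCost} (each $C_{a'}$ uses only tree arcs and the single non-tree arc $a'$). This is the same argument the authors used above to assert that ``$a$ is not included in any of the other induced cycles $C_{a'}$,'' so I would cite that property explicitly rather than re-prove it. Everything else is a routine coordinate-projection argument, so I do not anticipate any genuine obstacle beyond stating this independence cleanly.
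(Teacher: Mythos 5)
Your proof is correct and is essentially the same argument as the paper's: both rest on the fact that a non-tree arc $a$ appears in its own induced cycle $C_a$ (with incidence value $\pm 1$) and in no other $C_{a'}$, so the flows are compared coordinatewise at such an arc. The only difference is that you argue the contrapositive while the paper argues directly (picking $a'$ with $\lambda^1_{a'} \neq \lambda^2_{a'}$ and concluding $f^1_{a'} \neq f^2_{a'}$), which is a purely cosmetic distinction.
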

\begin{proof}
	Let $a' \in S$ be an arc, such that $\lambda^1_{a'}\neq \lambda^{2}_{a'}$. 
	Since $a'$ is contained in the cycle $C_{a^{\prime}}$  and cannot be contained in any other induced cycle, it follows that
	$f^{1}=f+\sum_{a\in S} \lambda^{1}_a\chi({C_a})$
	and $f^{2}=f+\sum_{a\in S} \lambda^{2}_a\chi({C_a})$ have different flow values on the arc $a^{\prime}$. 
	So we have  $f^1_{a^{\prime}}\neq f^{2}_{a^{\prime}}.$ 
\end{proof}

With the above considerations, any  optimal integer solution can be obtained from an initial optimal tree solution and a conical combination of incidence vectors of all zero cost induced cycles with bounded integer coefficients. 
Since any $\lambda_{a}$ for an arc $a\in S$ can take at most $(u_{a}+1)$ different values this gives the following result.

\begin{thm}\label{theo:upperBound}
	Let $F$ be the number of all optimal integer flows in $D$. Then $$
	F\le \max(1, \prod_{a\in S} (u_{a}+1)). $$\qed
\end{thm}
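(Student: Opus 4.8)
The plan is to derive the bound directly from the structural characterization established just before the theorem, namely that every optimal integer flow $f'$ can be written as $f' = f + \sum_{a \in S} \lambda_a \chi(C_a)$ for an initial optimal tree solution $f$ and integer coefficients $\lambda_a$ satisfying $0 \le \lambda_a \le u_a$. The strategy is to set up an injection from the set of all optimal integer flows into the finite set of admissible coefficient tuples $(\lambda_a)_{a \in S}$, and then count the latter.

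First I would fix an optimal tree solution $f$; such a solution exists by the discussion preceding the theorem (and by~\cite{Cook1998,Ahuja1993}). The preceding analysis shows that \emph{every} optimal integer flow $f'$ admits a representation $f' = f + \sum_{a \in S} \lambda_a \chi(C_a)$ with $0 \le \lambda_a \le u_a$ for each $a \in S$. This gives a map $\Phi$ sending each optimal integer flow $f'$ to a coefficient tuple $(\lambda_a)_{a \in S}$. The key point is that $\Phi$ is well-defined and injective, and its codomain is finite with cardinality at most $\prod_{a \in S}(u_a + 1)$.

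The heart of the argument is injectivity, and here I would invoke~\Cref{lem318} in its contrapositive form: if two optimal flows $f^1$ and $f^2$ yield the same coefficient tuple $\lambda^1 = \lambda^2$, then $f^1 = f^2$. In fact, \Cref{lem318} proves the stronger statement that distinct coefficient tuples force distinct flows, because each arc $a \in S$ lies in the induced cycle $C_a$ and in no other induced cycle $C_{a'}$, so the flow value on $a$ equals $f_a + \lambda_a$ and is uniquely determined by $\lambda_a$. This simultaneously guarantees that the map $f' \mapsto (\lambda_a)_{a\in S}$ is well-defined (the representation is unique) and injective. Counting then finishes the proof: each $\lambda_a$ ranges over the integers in $[0, u_a]$, giving at most $u_a + 1$ choices per arc, hence at most $\prod_{a \in S}(u_a + 1)$ admissible tuples in total. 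Taking the maximum with $1$ covers the degenerate case $S = \emptyset$, where the empty product is $1$ and $f$ is the unique optimal flow.

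The main obstacle, and the step deserving the most care, is ensuring that the coefficients $\lambda_a$ are genuinely forced to lie in $\{0, 1, \dots, u_a\}$ rather than merely being bounded in magnitude. This relies on the observation that since $a \notin T$ appears in exactly one induced cycle $C_a$, the flow value $f'_a$ equals $f_a + \lambda_a \chi_a(C_a)$, and feasibility ($l_a \le f'_a \le u_a$, with $l_a = 0$ by the standing assumption) pins $\lambda_a$ into the stated integer range. I would make explicit that this per-arc decoupling is exactly what both yields the upper bound $u_a + 1$ on the number of values and provides the injectivity via~\Cref{lem318}; the two facts are really the same structural observation applied twice.
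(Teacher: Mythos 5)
Your proposal is correct and takes essentially the same route as the paper: both rest on the representation $f' = f + \sum_{a\in S}\lambda_a\chi(C_a)$ with integer coefficients $0\le\lambda_a\le u_a$ (forced because each $a\in S$ lies in $C_a$ and in no other induced cycle), followed by counting the at most $\prod_{a\in S}(u_a+1)$ admissible tuples. The only cosmetic difference is your explicit appeal to \Cref{lem318} for injectivity of the flow-to-tuple map, which the paper likewise folds into the discussion preceding the theorem (and which is in fact automatic, since a tuple determines its flow).
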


We also can give a lower bound for $F$.  Let $ f $ be an optimal flow. 
Considering one induced cycle $ C_a$ with $a\in S$, we can generate at least as many new optimal flows as the maximum free capacity of the cycle, denoted with $u(C_a)$ allows. Each unit of flow that is added or removed from the cycle yields another flow. 
Considering another induced cycle $C_{a'}$ instead of $C_a$ gives $u(C_{a'})$ many different flows.
Since we can augment over each induced cycle separately, we get the following lower bound for $F$.

\begin{thm}\label{theo:lowerBound}
	Let $F$ be the number of all optimal integer flows in $D$. Then $$ \min (1,\sum_{a \in  S}u(C_a)) \leq  F.$$ 
\end{thm}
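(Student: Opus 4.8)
The plan is to construct an explicit family of pairwise distinct optimal integer flows and count them. I would start from an optimal tree solution $f$ with tree structure $(T,L,U)$, which exists by the results cited before \Cref{theo:upperBound}. For every non-tree arc $a\in S$ the induced cycle $C_a$ has zero cost: since $\overline{c}_a=0$, \Cref{prop:CycleCost=ReducedCost} gives $c(C_a)=\pm\overline{c}_a=0$. Hence augmenting $f$ along $C_a$ leaves the objective value unchanged, so any flow obtained in this way is again optimal, provided it is feasible.

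Next I would fix, for each $a\in S$, the augmentation amounts $t\in\{1,\dots,u(C_a)\}$ and consider the flows $f+t\,\chi(C_a)$. By the definition of $u(C_a)$ as the maximum free capacity of the cycle $C_a$, every such flow respects all capacity constraints and is therefore a feasible, and hence optimal, integer flow. In the notation used before \Cref{lem318}, the flow $f+t\,\chi(C_a)$ corresponds to the coefficient vector $\lambda$ with $\lambda_a=t$ and $\lambda_{a'}=0$ for all $a'\neq a$.

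The key step is distinctness, and here I would invoke \Cref{lem318}: two flows coming from different coefficient vectors are different. The reason is structural: by \Cref{prop:CycleCost=ReducedCost} the arc $a$ lies in $C_a$ and, because $a\notin T$, in no other induced cycle $C_{a'}$ with $a'\neq a$. Consequently $f+t\,\chi(C_a)$ carries flow $t$ on arc $a$ while $f+t'\,\chi(C_{a'})$ carries flow $0$ there, and $f+t\,\chi(C_a)$ differs from $f+t''\,\chi(C_a)$ on arc $a$ whenever $t\neq t''$. Thus the flows indexed by distinct pairs $(a,t)$ are pairwise different, and all of them differ from $f$ itself (which corresponds to $\lambda=0$). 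Counting yields $\sum_{a\in S}u(C_a)$ such flows together with $f$, so $F\ge 1+\sum_{a\in S}u(C_a)$, which in particular gives the claimed bound $\min\bigl(1,\sum_{a\in S}u(C_a)\bigr)\le F$.

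The main obstacle is not the final counting but making the two ingredients airtight: that $u(C_a)$ really equals the largest feasible augmentation along $C_a$ (so that feasibility is guaranteed for every $t\le u(C_a)$), and that distinctness survives even when different induced cycles share many tree arcs. Both reduce to the single structural fact that each non-tree arc $a$ appears in exactly one induced cycle $C_a$; this is precisely what \Cref{lem318} exploits, and once it is in place the remainder is routine bookkeeping.
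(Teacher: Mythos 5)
Your proof is correct and takes essentially the same route as the paper: the paper's own argument (given in the paragraph preceding the theorem) likewise augments an optimal tree solution separately along each zero-cost induced cycle $C_a$, $a \in S$, by $1,\dots,u(C_a)$ units, and relies on the fact that each non-tree arc lies in exactly one induced cycle (the structural fact behind \Cref{lem318}) to get pairwise distinctness of the resulting optimal flows. You in fact establish the stronger bound $F \ge 1 + \sum_{a\in S} u(C_a)$, which matches what the paper evidently intends, since the stated bound with $\min$ is vacuous ($F\ge 1$ always) and should read $\max\bigl(1,\sum_{a\in S}u(C_a)\bigr) \le F$.
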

With the same argumentation, we can give the following bounds for the total number of feasible integer flows, denoted by \textit{Fea}. Here, we do not have to restrict ourselves to induced cycles with zero reduced cost.
$$ 
\min (1,\sum_{a\notin T} u(C_a)) \le \textit{Fea} \le \max(1,\prod_{a \notin T}(u_{a}+1)).
$$

\begin{figure}
	\centering
	
	\begin{minipage}{0.75\textwidth}
		\subfloat[]{%
			\centering%
			\scalebox{1}{\begin{tikzpicture}[->,shorten >=1pt,auto,node distance=2.8cm,
				semithick]
				\foreach \pos/\name/\dir/\xshift/\yshift in {{(0,0)/a//0cm/0cm}, {(2,0)/b//0cm/0cm}, {(0,-2)/c/left/-0.15cm/-0.1cm},
					{(2,-2)/d/right/0.15cm/-0.1cm}, {(1,-3.5)/e/below/0cm/-0.35cm}}
				\node[vertex,label={[\dir,xshift= \xshift, yshift= \yshift] $\name$}] (\name) at \pos {};
				
				\path (a) edge[->,thick,font=\footnotesize]   	node[] {(0,1,20)} (b)
				(c) edge[<-,thick,font=\footnotesize]   	node[] {(0,1,50)} (a)
				(a) edge[->,thick,font=\footnotesize]   	node[sloped] {(0,4,0)} (d)
				(b) edge[->,thick,font=\footnotesize]   	node[] {(5,5,0)} (d)
				(c) edge[->,thick,font=\footnotesize]   	node[sloped,below] {(0,10,0)} (d)
				(e) edge[<-,thick,font=\footnotesize]   	node[sloped,below] {(12,14,0)} (c)
				(d) edge[->,thick,font=\footnotesize]   	node[sloped,below] {(2,14,0)} (e);

				\begin{pgfonlayer}{background}
				\foreach \source / \dest in {d/a,d/b,c/e,d/e}
				\path[selected edge] (\source.center) -- (\dest.center);
				\end{pgfonlayer}

				\end{tikzpicture}}%
		}
		\hfill
		\subfloat[]{%
			\centering%
			\scalebox{1}{\begin{tikzpicture}[->,shorten >=1pt,auto,node distance=2.8cm,
				semithick]
				\foreach \pos/\name/\dir/\xshift/\yshift in {{(0,0)/a//0cm/0cm}, {(2,0)/b//0cm/0cm}, {(0,-2)/c/left/-0.15cm/-0.1cm},
					{(2,-2)/d/right/0.15cm/-0.1cm}, {(1,-3.5)/e/below/0cm/-0.35cm}}
				\node[vertex,label={[\dir,xshift= \xshift, yshift= \yshift] $\name$}] (\name) at \pos {};
				
				\path (a) edge[->,thick,font=\footnotesize]   	node[] {(0,1,20)} (b)
				(c) edge[<-,thick,font=\footnotesize]   	node[] {(0,1,50)} (a)
				(a) edge[->,thick,font=\footnotesize]   	node[sloped] {(0,4,0)} (d)
				(b) edge[->,thick,font=\footnotesize]   	node[] {(5,5,0)} (d)
				(c) edge[->,thick,font=\footnotesize]   	node[sloped,below] {(0,10,0)} (d)
				(e) edge[<-,thick,font=\footnotesize]   	node[sloped,below] {(2,2,0)} (c)
				(d) edge[->,thick,font=\footnotesize]   	node[sloped,below] {(2,2,0)} (e);

				\begin{pgfonlayer}{background}
				\foreach \source / \dest in {d/a,d/b,c/e,d/e}
				\path[selected edge] (\source.center) -- (\dest.center);
				\end{pgfonlayer}

				\end{tikzpicture}}
		}
	\end{minipage}	
	\caption{Two graphs whose arcs are labeled by $(f_{ij},u_{ij},\overline{c}_{ij})$. Arcs in green represent the tree arcs.}\label{Figrue: ExampleNumberOfFlows}
\end{figure}
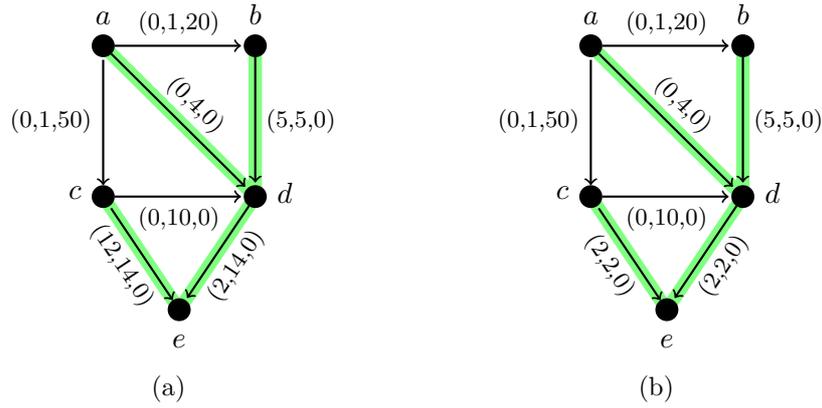
We close this chapter with an example that shows that the bound in~\Cref{theo:upperBound} can be tight, but is not necessarily so.\\
Consider~\Cref{Figrue: ExampleNumberOfFlows}. 
In (a), the arc $(c,d)$ is the only arc not in $T$ with $c_{ij}=0$. We can augment ten times over this arc with one flow unit each time. Since there are no more $ 0 $-cycles, we get $11$ optimal flows, and we have $ | F | = \delta := \max(1, \prod_{a\in S} (u_{a}+1))$.\\
In (b), we have the same $ 0 $-cycle. However, due to the available capacities and flow values on other arcs, we cannot augment over it. 
There is only one optimal flow, but $ \delta = 10 $ and thus $ | F | <\delta $. If we increase the arc $ (c,d) $ capacities  to an arbitrarily large value, we get $ | F | \ll \delta $. \\

\section{Conclusion}
This paper presented the main theoretical bases and implementation of an algorithm to determine all optimal solutions of a linear integer minimum cost flow problem.  
For a given optimal solution, the algorithm efficiently determines proper zero cost cycles by using a depth-first search in a reduced network leading to a new optimal solution. 
The presented algorithm requires $\mathcal{O}(F(m+n)+mn+M)$ time to solve the AOFP.
Using the proposed algorithm, a new method is derived to solve the $K$ best flow problem with an improved running time of $\mathcal{O}(Kn^3+M)$. 
Besides, lower and upper bounds for the number of all optimal integer solutions are shown, which is based on the fact that any optimal integer flow can be obtained from an initial optimal tree solution plus a conical combination of all zero cost induced cycles with bounded coefficients.

Future research could focus on structured combinatorial ways to efficiently explore these conical combinations to derive an improved algorithm for determining alternative solutions in network flow problems. 
An extension of the ideas and techniques presented here to methods for determining supported and non-supported solutions in bi- or multi objective minimum cost flow problems can be another future topic.





\appendix
\gdef\thesection{\Alph{section}}
\makeatletter
\renewcommand\@seccntformat[1]{\csname the#1\endcsname.\hspace{0.5em}}
\makeatother
\section{Appendix}

In this section, we want to provide the  missing proofs for some results from~\Cref{UpperBound}.
Recall that any simple path $P$ (cycle $C$) in $D_f$ corresponds to an undirected path $P_D$ (cycle $C_D$) in $D$ and we had defined the incidence vector of $P$ as
\begin{align*}
\chi_{ij}(P) := \begin{cases}
1, &\text{if $P$ traverses $(i,j)$ in $D_f$,}\\
-1, &\text{if $P$ traverses $(j,i)$ in $D_f$,}\\
0, &\text{otherwise}\\
\end{cases} && \text{for all $(i,j) \in A$.}
\end{align*}

We also want to define such a (directed) incidence vector in the same manner for paths that does not necessarily exist in $D_f$.

\begin{definition}
	For an undirected path $P_{u,v}$ from node $u$ to $v$ in $D$, we define $\chi(P_{u,v})\in \{-1,0,1\}^A$  as the \emph{(directed) incidence vector of $P_{u,v}$} as
	\begin{align*}
	\chi_{ij}(P_{u,v}) := \begin{cases}
	1, &\text{if $P_{u,v}$ traverses $(i,j)$ in its forward direction,}\\
	-1, &\text{if $P_{u,v}$ traverses $(i,j)$ in its backward direction,}\\
	0, &\text{otherwise}\\
	\end{cases} && \text{for all $(i,j) \in A$.}
	\end{align*}
	We also define the cost of the path as $c(P_{u,v})= \sum_{(i,j)\in A} \chi_{ij}(P_{u,v}) c_{ij}$, i.e., the cost of an arc traversed in forward direction contributes positively to the total cost, whereas the cost of a backwards-traversed arc contributes negatively.  
	The incidence vector and the cost of a cycle $C$ in $D$ (with a given orientation) are defined analogously.
\end{definition}

These definitions can be seen as also allowing forward and backward arcs in $D_f$ even if the flow value equals the lower or upper capacity, i.e., $A^+:=\{(i,j) \mid (i,j)\in A,\, f_{ij} \leq u_{ij}\}\ \text{ and }
A^- :=\{ (j,i) \mid (i,j) \in A,\, f_{ij}\geq l_{ij}\}$. 
Then, any undirected path $P_D$ corresponds to a directed path $P$ (with not necessarily free capacity) in $D_f$ and  $\chi(P) = \chi(P_D)$.
This leads to the following property.

\begin{property}\label{prop:c(f,P) = c(P_D)}
	Let $P$ ($C$) be a directed path (cycle) in $D_f$ and $P_D$ ($C_D$) its equivalent in $D$. Then $\chi(P) = \chi(P_D)$ ($\chi(C) = \chi(C_D)$) and $c(f,P) = c(P_D)$ ($c(f,C) =c(C_D)$).
\end{property}

With these definitions we can prove the following results:

\begin{thm} [\Cref{AnyCycleFrominducedCycles}]
	Each undirected cycle $ C_D$ in $D$ can be represented by a composition  of the unique induced cycles $$ C_a \text{ for all } a\in C_D \backslash T. $$

\end{thm}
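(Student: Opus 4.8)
The plan is to view everything inside the cycle space of $D$ over $\mathbb{F}_2$, where the composition $\circ$ (symmetric difference of arc sets) is exactly vector addition, and to prove the sharp statement that the \emph{specific} composition
\[
C' := C_{a_1} \circ C_{a_2} \circ \cdots \circ C_{a_r}, \qquad \text{where } C_D \setminus T = \{a_1,\dots,a_r\},
\]
coincides with $C_D$ itself. The structural fact that makes this tractable is \Cref{prop:CycleCost=ReducedCost}: each induced cycle $C_a$ consists solely of tree arcs together with the single non-tree arc $a$. In particular, $a$ is the \emph{only} non-tree arc occurring in $C_a$, and no tree arc is ever a non-tree arc. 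First I would record this as the key bookkeeping device for the whole argument.

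Next I would compare $C'$ and $C_D$ on the non-tree arcs. A non-tree arc $a'$ belongs to $C_a$ if and only if $a' = a$, so $a'$ appears in an odd number of the summands $C_{a_1},\dots,C_{a_r}$ precisely when $a' \in \{a_1,\dots,a_r\} = C_D \setminus T$. Since symmetric difference keeps exactly those arcs that occur an odd number of times, the non-tree arcs of $C'$ are \emph{exactly} $C_D \setminus T$, i.e. $C'$ and $C_D$ carry the same non-tree arcs. Consequently the composition $C_D \circ C'$ contains no non-tree arc at all, so $C_D \circ C' \subseteq T$.

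It then remains to rule out any leftover tree arcs, and this is the heart of the proof. Both $C_D$ and each $C_{a_i}$ are cycles, hence arc sets in which every vertex has even degree; since degree parities add modulo $2$, the symmetric difference $C_D \circ C'$ is again a set in which every vertex has even degree. But a nonempty arc set with all vertex degrees even necessarily contains a cycle, whereas $C_D \circ C' \subseteq T$ lies inside the acyclic spanning tree $T$ and therefore can contain no cycle. The only possibility is $C_D \circ C' = \emptyset$, which is equivalent to $C_D = C'$, proving the claim. The main obstacle I anticipate is purely one of careful justification rather than difficulty: one must argue cleanly that the $\mathbb{F}_2$-sum of cycles stays an even-degree (Eulerian) subgraph and that an even-degree subgraph of a forest must be empty, both of which are standard but should be stated explicitly so that the parity/acyclicity dichotomy is airtight.
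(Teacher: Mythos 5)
Your proof is correct, and it takes a genuinely different route from the paper's. The paper argues by direct computation: it writes $C_D = \{a_1, P^T_{j_1,i_2}, a_2, \dots, a_t, P^T_{j_t,i_1}\}$, expands each induced cycle as $C_{a_k} = a_k \circ P^T_{j_k,i_k}$, splits every tree path through the root $r$, and then reorders and telescopes the root paths until $C_D$ reappears. You instead work in the $\mathbb{F}_2$ cycle space: you match the non-tree arcs of the composition against those of $C_D$ using \Cref{prop:CycleCost=ReducedCost} (each $C_a$ contains exactly one non-tree arc, namely $a$), and then eliminate any leftover tree arcs by the parity/acyclicity dichotomy, since an even-degree subgraph of the spanning tree $T$ must be empty. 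Your argument is arguably the more airtight of the two: the paper's chain of identities silently relies on facts such as $P^T_{j_k,r} \circ P^T_{r,i_{k+1}} = P^T_{j_k,i_{k+1}}$ (cancellation of the common prefix of two root paths) and on the final assertion that ``all components are disjoint,'' each of which itself needs a justification of precisely the kind your parity argument supplies once and for all. What the paper's computational approach buys in exchange is reuse: essentially the same telescoping manipulation is repeated, with orientations and signs attached, to prove the directed incidence-vector version in \Cref{lem:DirectedComposition}, where a plain mod-$2$ argument no longer suffices because orientation information is lost. If you wanted your route to cover that lemma as well, you would need a signed analogue --- for instance, observe that $\chi(C_D) - \sum_{a \in C_D \setminus T} \chi(C_a)$ is a circulation supported on the arcs of $T$, and any circulation supported on a forest vanishes --- which is a natural extension of your idea but not automatic from the $\mathbb{F}_2$ statement alone.
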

\begin{proof}

	Let $ C_D $ be an undirected cycle in $ D $.
	First, we decompose the cycle $C_D$ into non-tree arcs $a_k=(i_k,j_k)$ and possibly empty tree paths $P^T_{j_l,i_{l+1}}$:
	$$ C_D = \{a_1, P^T_{j_1,i_2}, a_2 , P^T_{j_2,i_3}, \dots, a_t, P^T_{j_t,i_1}   \}$$
	Now we consider the composition of the induced cycles $C_{a_k}$
	\begin{align*}
	&C_{a_1}\circ C_{a_2} \circ \dots \circ C_{a_t} & \\
	=& a_1 \circ P^T_{j_1,i_1} \circ a_2 \circ P^T_{j_2,i_2} \circ \dots \circ a_t \circ P^T_{j_t,i_t} & (C_k = a_k \circ P^T_{j_k,i_k})\\
	=& a_1 \circ P^T_{j_1,r}\circ P^T_{r,i_1} \circ a_2 \circ P^T_{j_2,r} \circ P^T_{r,i_2}\circ \dots \circ a_t \circ P^T_{j_t,r} \circ P^T_{r,i_t} & ( P^T_{j_k,i_k} = P^T_{j_k,r} \circ P^T_{r,i_k})\\
	=&  a_1 \circ P^T_{j_1,r}\circ P^T_{r,i_2}\circ  a_2 \circ P^T_{j_2,r} \circ \dots \circ P^T_{r,i_t} \circ a_t \circ P^T_{j_t,r} \circ P^T_{r,i_1}  & ( \circ \text{ is commutative})\\
	=& a_1 \circ P^T_{j_1,i_2}\circ  a_2 \circ P^T_{j_2,i_3} \circ \dots \circ P^T_{j_{t-1},i_t} \circ a_t \circ P^T_{j_t,i_1}   & (  P^T_{j_k,r}\circ P^T_{r,i_{k+1}} = P^T_{j_k,i_{k+1}})\\
	=& \{a_1, P^T_{j_1,i_2}, a_2 , P^T_{j_2,i_3}, \dots, a_t, P^T_{j_t,i_1}   \}& (\text{all components are disjoint})\\
	=&C_D&
	\end{align*}
	For illustration see~\Cref{Fig: CycleAndCyclceComposition}.

	\qedhere
\end{proof}

\begin{figure}
	\centering
	\subfloat[]{
		\centering
		\scalebox{1}{\begin{tikzpicture}[->,shorten >=1pt,auto,node distance=2.8cm,
			semithick]%
			\foreach \pos/\name/\label\dir in {{(0,0)/2//above}, {(2,0)/3/i_1=j_2/above}, {(0,-2)/4/j_1/below},
				{(-1.5,-1)/1/r/above},{(4,-1)/7/i_2/right/above}}
			\node[vertex] (\name)[label=\dir:$\label$] at \pos {};
			\node[vertex] (5) [label={[yshift=-0.8cm] }] at (2,-2) {};

			\path (1) edge[<-,thick]   	node[above] {} (2)
			(2) edge[->,thick]   	node[above] {} (3)
			(3) edge[->,thick,dashed]   	node[right,yshift=0.2cm] { }  node[right]{}(4)
			(7) edge[->,thick,dashed]   	node[right,yshift=0.2cm] { }  node[right]{}(3)
			(4) edge[->,thick]   	node[below] {} (5)
			(4) edge[->,thick]   	node[below] {} (1)
			(7) edge[->,thick]   	node[below] {} (5);
			
			\begin{pgfonlayer}{background}
			\path  (3) edge[draw,line width=5pt,-,yellow!50] (4)
			(3) edge[draw,line width=5pt,-,yellow!50] (7)
			(5) edge[draw,line width=5pt,-,yellow!50] (7)
			(4) edge[draw,line width=5pt,-,yellow!50] (5);
			\end{pgfonlayer}

			\node at (2,-1) {$C_D$};
			\end{tikzpicture}}%
	}
	
	\begin{minipage}{1\textwidth}
		\subfloat[]{
			\scalebox{1}{\begin{tikzpicture}[->,shorten >=1pt,auto,node distance=2.8cm,
				semithick]%
				\foreach \pos/\name/\label\dir in {{(0,0)/2//above}, {(2,0)/3/i_1=j_2/above}, {(0,-2)/4/j_1/below},
					{(-1.5,-1)/1/r/above},{(4,-1)/7/i_2/right/above}}
				\node[vertex] (\name)[label=\dir:$\label$] at \pos {};
				\node[vertex] (5) [label={[yshift=-0.8cm] }] at (2,-2) {};

				\path (1) edge[<-,thick]   	node[above] {} (2)
				(2) edge[->,thick]   	node[above] {} (3)
				(3) edge[->,thick,dashed]   	node[right,yshift=0.2cm] { }  node[right]{}(4)
				(7) edge[->,thick,dashed]   	node[right,yshift=0.2cm] { }  node[right]{}(3)
				(4) edge[->,thick]   	node[below] {} (5)
				(4) edge[->,thick]   	node[below] {} (1)
				(7) edge[->,thick]   	node[below] {} (5);
				
				\begin{pgfonlayer}{background}
				\path  (3) edge[draw,line width=5pt,-,red!50] (4)
				(3) edge[draw,line width=5pt,-,red!50] (2)
				(1) edge[draw,line width=5pt,-,red!50] (2)
				(4) edge[draw,line width=5pt,-,red!50] (1);
				\end{pgfonlayer}

				\node at (0,-1) {$C_{i_1j_1}$};
				\end{tikzpicture}}%
		}
		\hfill
		\subfloat[]{
			\centering
			\scalebox{1}{\begin{tikzpicture}[->,shorten >=1pt,auto,node distance=2.8cm,
				semithick]
				\foreach \pos/\name/\label\dir in {{(0,0)/2//above}, {(2,0)/3/i_1=j_2/above}, {(0,-2)/4/j_1/below},
					{(-1.5,-1)/1/r/above},{(4,-1)/7/i_2/right/above}}
				\node[vertex] (\name)[label=\dir:$\label$] at \pos {};
				\node[vertex] (5) [label={[yshift=-0.8cm] }] at (2,-2) {};

				\path (1) edge[<-,thick]   	node[above] {} (2)
				(2) edge[->,thick]   	node[above] {} (3)
				(3) edge[->,thick,dashed]   	node[right,yshift=0.2cm] { }  node[right]{}(4)
				(7) edge[->,thick,dashed]   	node[right,yshift=0.2cm] { }  node[right]{}(3)
				(4) edge[->,thick]   	node[below] {} (5)
				(4) edge[->,thick]   	node[below] {} (1)
				(7) edge[->,thick]   	node[below] {} (5);
				
				\begin{pgfonlayer}{background}
				\path  (3) edge[draw,line width=5pt,-,orange!50] (7)
				(3) edge[draw,line width=5pt,-,orange!50] (2)
				(1) edge[draw,line width=5pt,-,orange!50] (2)
				(4) edge[draw,line width=5pt,-,orange!50] (1)
				(5) edge[draw,line width=5pt,-,orange!50] (7)
				(4) edge[draw,line width=5pt,-,orange!50] (5);
				\end{pgfonlayer}

				\node at (2,-1) {$C_{i_2j_2}$};
				\end{tikzpicture}}%
		}
	\end{minipage}	
	\caption{(a) A graph with an (undirected) cycle $C_D$ and (b) (c) the two induced cycles whose composition equals $C_D$. Non-tree arcs are marked as dashed line. It holds $C_D= (i_1,j_1) \circ P^T_{j_1,i_1} \circ (i_2,j_2) \circ P^T_{j_2,i_1}=
		C_{i_1j_1} \circ C_{i_2j_2}$. Here $P^T_{j_2i_1}=\emptyset$.} \label{Fig: CycleAndCyclceComposition}
\end{figure}
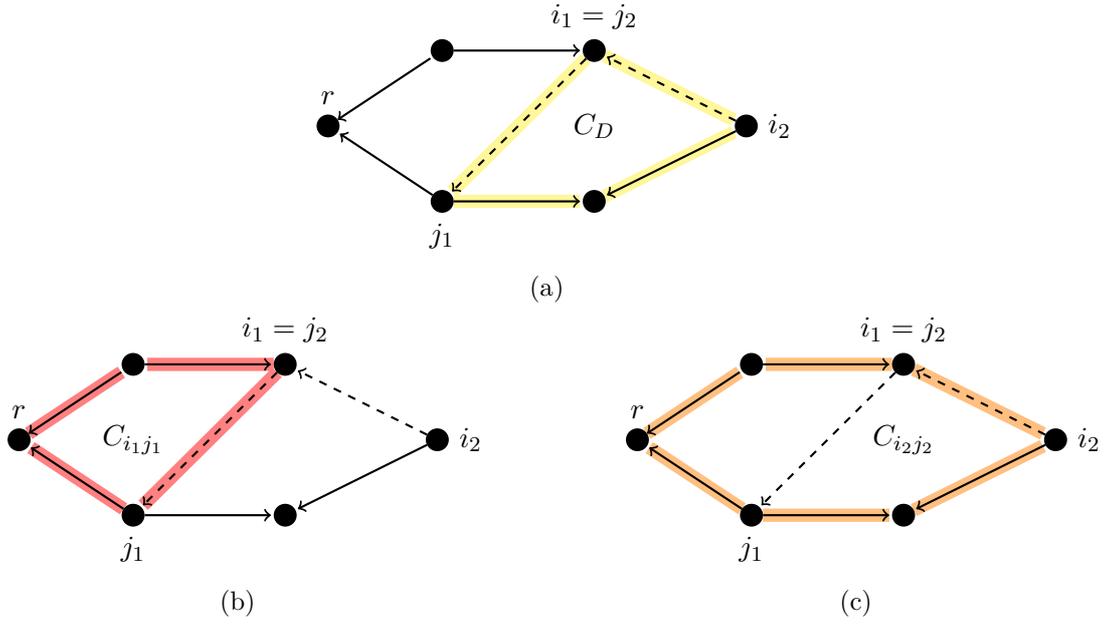

\begin{lem}[\Cref{lem:DirectedComposition}]{\label{lem:DirectedCompositionAppendix}}
	For any cycle $C\in D_f$ let $C_D$ be the corresponding undirected cycle in $D$. Then it holds that:
	$$ \chi(C_D) = \sum_{a\in C_D\backslash T} \chi(C_a) \qquad\text{ and } \qquad c(C_D) = \sum_{a\in C_D\backslash T} c(C_a)$$.
\end{lem}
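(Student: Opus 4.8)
The plan is to prove the incidence-vector identity $\chi(C_D) = \sum_{a \in C_D \setminus T} \chi(C_a)$ first and then obtain the cost identity essentially for free. By the definitions in the appendix, $c(C_D) = \sum_{(i,j)} \chi_{ij}(C_D)\, c_{ij}$ and $c(C_a) = \sum_{(i,j)} \chi_{ij}(C_a)\, c_{ij}$, so taking the inner product of the cost vector with both sides of the incidence-vector identity and using linearity yields $c(C_D) = \sum_{a \in C_D \setminus T} c(C_a)$. Thus the whole lemma reduces to the first equation, which I would prove by mirroring the telescoping argument already used for \Cref{AnyCycleFrominducedCycles}, but carrying signs rather than working modulo symmetric difference.

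Concretely, I would introduce for every node $w$ the incidence vector $q_w := \chi(P^T_{r,w})$ of the unique tree path from the fixed root $r$ to $w$, and record the signed analogue of the set-level splitting step $P^T_{j,i} = P^T_{j,r}\circ P^T_{r,i}$, namely $\chi(P^T_{u,v}) = q_v - q_u$ for all nodes $u,v$ (the portion the two root paths share is traversed once forward and once backward and cancels, leaving exactly the simple tree path). Using the same decomposition of $C_D$ into non-tree arcs and tree paths as in \Cref{AnyCycleFrominducedCycles},
$$C_D = \{a_1, P^T_{j_1,i_2}, a_2, P^T_{j_2,i_3}, \dots, a_t, P^T_{j_t,i_1}\}, \qquad a_k = (i_k,j_k),$$
I would write each induced cycle as $\chi(C_{a_k}) = \chi(a_k) + \chi(P^T_{j_k,i_k}) = \chi(a_k) + q_{i_k} - q_{j_k}$ and the full cycle as $\chi(C_D) = \sum_{k=1}^{t} \chi(a_k) + \sum_{k=1}^{t}\bigl(q_{i_{k+1}} - q_{j_k}\bigr)$, with indices taken cyclically. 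Subtracting, the single-arc terms $\chi(a_k)$ match, and the difference of the remaining sums is $\sum_{k}\bigl(q_{i_{k+1}} - q_{i_k}\bigr) = 0$ by cyclic telescoping, which gives the claimed equality.

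The step that needs genuine care — and the main obstacle — is sign consistency: I must be sure that each non-tree arc $a_k$ enters $\chi(C_{a_k})$ with the same sign as it enters $\chi(C_D)$, and that every tree arc shared by two consecutive induced cycles is cancelled (opposite signs) rather than doubled. This is exactly what the orientation convention of \Cref{def:Ca} is designed to guarantee. Since $f$ is a tree solution, a non-tree arc $a$ sits either at its lower bound (so $a \in L$) or at its upper bound (so $a \in U$); a directed cycle $C \in D_f$ can traverse the residual copy of $a$ only in the unique direction allowed by feasibility, and that direction is precisely the orientation that \Cref{def:Ca} assigns to $C_a$. Hence $\chi_a(C_D) = \chi_a(C)$ equals $\chi_a(C_a)$ for each $a \in C_D \setminus T$, so every coefficient is $+1$ and the telescoping above is sign-correct; recalling $\chi(C) = \chi(C_D)$ from \Cref{prop:c(f,P) = c(P_D)}, this is also the version stated in the main text. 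As an alternative to the explicit telescoping, I could instead argue that the vectors $\{\chi(C_a)\}_{a\notin T}$ are the fundamental cycles of $T$ and hence a basis of the cycle space; since $\chi(C)$ lies in the cycle space and each non-tree arc $a$ belongs to $C_a$ alone (by \Cref{prop:CycleCost=ReducedCost}), projecting $\chi(C)$ onto the non-tree coordinates immediately pins every coefficient to $1$ for $a \in C_D \setminus T$ and to $0$ otherwise.
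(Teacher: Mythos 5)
Your proof is correct and follows essentially the same route as the paper's: the paper likewise establishes $\chi_a(C_D)=\chi_a(C_a)$ for each non-tree arc $a$ via the tree-solution property ($a\in L$ or $a\in U$ leaves only one traversal direction available in $D_f$, which coincides with the orientation of \Cref{def:Ca}), then reruns the telescoping decomposition of \Cref{AnyCycleFrominducedCycles} with signed incidence vectors split through the root $r$ (your $q_w=\chi(P^T_{r,w})$ and $\chi(P^T_{u,v})=q_v-q_u$ are exactly the paper's splitting $P^T_{j_k,i_k}=P^T_{j_k,r}\circ P^T_{r,i_k}$ in vector form), and finally derives the cost identity by linearity, just as you do. Your closing alternative via the fundamental-cycle basis of the cycle space is a valid and arguably cleaner shortcut that the paper does not use, but since you present it only as an aside, the main argument is the same as the paper's.
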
 
\begin{proof}
	Let $C$ be a cycle in $ D_f$, $C_D$ the corresponding cycle in $D$ and let $a$ be a non-tree arc in $\ C_D\setminus T$.
	If $a=(i,j)\in L$, the cycle $C_D$ traverses it in the forward direction, since only $(i,j)$ is present in $D_f$ but $(j,i)$ is not. Similarly, if $a \in U$ then $C_D$ traverses it in its backward direction. So by~\Cref{def:Ca} we have $\chi_{a}(C_D) = \chi_a(C_a) = \sum_{a\in C_D\setminus T} \chi_a(C_a)$.\\
	Now we can follow the previous proof.
	Note that each time the sub-paths canceled each other out by the composition, they could be seen as the same sub-path but traversed in opposite directions. This way, the incidence vectors would sum up to zero for these sub-paths.
	The tree-paths $C_D\cap T$ are traversed in the correct direction by the composition of cycles $C_{a_k}$ due to their correct orientation.\\
	Formally, let $\chi^a(C_a)$ be the vector having only $\chi_a(C_a)$ in the $a$-component, i.e., $\chi^a_a(C_a) = \chi_a(C_a)$ and 0 otherwise.
	So we have $\chi(C_a) = \chi^a(C_a) + \chi(P^T_{j,i})$.
	Again, suppose that \\ $ C_D = \{a_1, P^T_{j_1,i_2}, a_2 , P^T_{j_2,i_3}, \dots, a_t, P^T_{j_t,i_1}   \}$. Then
	\begin{align*}
	&\chi (C_{a_1}) + \chi(C_{a_2}) + \dots + \chi(C_{a_t}) & \\
	=& \chi^{a_1}(C_{a_1}) + \chi( P^T_{j_1,i_1}) + \chi^{ a_2}(C_{a_2}) + \chi(P^T_{j_2,i_2}) + \dots + \chi^{a_t}(C_{a_t}) + \chi(P^T_{j_t,i_t}) & \\
	=& \chi^{a_1}(C_{a_1}) + \chi(P^T_{j_1,r}) +\chi( P^T_{r,i_1}) + \chi^{a_2}(C_{a_2}) + \chi(P^T_{j_2,r}) + \chi(P^T_{r,i_2})&\\
	&+ \dots + \chi^{a_t}(C_{a_t}) + \chi(P^T_{j_t,r}) + \chi(P^T_{r,i_t}) & \\
	=&  \chi^{a_1}(C_{a_1}) + \chi(P^T_{j_1,r}) + \chi(P^T_{r,i_2})+  \chi^{a_2}(C_{a_2}) + \chi(P^T_{j_2,r}) &\\
	&+ \dots + \chi(P^T_{r,i_t}) + \chi^{a_t}(C_{a_t}) + \chi(P^T_{j_t,r}) + \chi(P^T_{r,i_1})  & \\
	=& \chi^{a_1}(C_{a_1}) + \chi(P^T_{j_1,i_2}) +  \chi^{a_2}(C_{a_2}) + \chi(P^T_{j_2,i_3}) + \dots + \chi(P^T_{j_{t-1},i_t}) + \chi^{a_t}(C_{a_t}) + \chi(P^T_{j_t,i_1})   & \\
	=& \chi^{a_1}(C_D) + \chi(P^T_{j_1,i_2}) +  \chi^{a_2}(C_D) + \chi(P^T_{j_2,i_3}) + \dots + \chi(P^T_{j_{t-1},i_t}) + \chi^{a_t}(C_D) + \chi(P^T_{j_t,i_1})& \\
	=&\chi(C_D)&
	\end{align*}
	So we have $ \chi(C_D) = \sum_{a\in C_D\backslash T} \chi(C_a)$.
	To show that the cost is equivalent is now easy:
	\begin{align*}
	c(C_D) =& \sum_{(i,j) \in A} \chi_{ij}(C_D)c_{ij} = \sum_{(i,j) \in A} \sum_{a\in C_D\backslash T}\chi_{ij}(C_a)c_{ij}\\
	=& \sum_{a\in C_D\backslash T} \sum_{(i,j) \in A}  \chi_{ij}(C_a)c_{ij} = \sum_{a\in C_D\backslash T} c(C_a)
	\end{align*}
	
\end{proof}

\begin{thm}\label{theo:optimalFCompositionOfCa}
	Let $f$ be an optimal tree solution with an associated tree structure $(T,L,U)$ and let $f^*$ be another integer flow. Then $f^*$ is optimal if and only if
	$$ f^* = f + \sum_{a \in S} \lambda_a \chi(C_a)$$
	with
	\begin{itemize}
		\item[(i)] $\lambda_a \in \mathbb{Z}$ with $ 0 \leq f_e + \sum_{a\in S} \lambda_a \chi_e (C_a) \leq u_e \quad \forall \, e\in A$,
		\item[(ii)] the set $S$ consists of all non-tree arcs with zero reduced cost: $S = \{a\notin T \, | \, \overline{c}_a = 0 \}$.
	\end{itemize}
\end{thm}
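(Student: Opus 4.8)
The plan is to prove both implications by combining the flow decomposition of \Cref{prop:flowDecomposition} with the cycle-composition identity of \Cref{lem:DirectedComposition}, exploiting that each non-tree arc $a$ lies on exactly one induced cycle, namely $C_a$. For the ``only if'' direction I would start from the two optimal integer flows $f$ and $f^*$. As both are feasible, \Cref{prop:flowDecomposition} yields $f^* = f + \sum_i \mu_i \chi(C_i)$ for directed cycles $C_i \subset D_f$ and positive integers $\mu_i$. Applying \Cref{lem:DirectedComposition} to each $C_i$ and rearranging the resulting double sum by collecting the contribution of each non-tree arc produces a representation $f^* = f + \sum_{a \notin T} \lambda_a \chi(C_a)$ with integer coefficients $\lambda_a$. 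The structural fact from \Cref{prop:CycleCost=ReducedCost} that $C_a$ consists only of tree arcs together with $a$ means that $a$ appears in no other induced cycle; hence the $a$-component of $f^* - f$ equals $\lambda_a \chi_a(C_a) = \pm \lambda_a$, which pins down $\lambda_a$ uniquely and, using $f_a \in \{0, u_a\}$ for the non-tree arcs of a tree solution, forces $0 \le \lambda_a \le u_a$. This is condition (i).

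Next I would use optimality to cut the support down to $S$. Linearity of cost together with \Cref{lem:DirectedComposition} gives $c(f^*) = c(f) + \sum_{a \notin T} \lambda_a c(C_a)$. By the tree-solution optimality conditions and \Cref{prop:CycleCost=ReducedCost}, every induced cycle satisfies $c(C_a) = \overline{c}_a \ge 0$ when $a \in L$ and $c(C_a) = -\overline{c}_a \ge 0$ when $a \in U$, so $c(C_a) \ge 0$ throughout, with equality precisely when $\overline{c}_a = 0$. Since $c(f^*) = c(f)$ and every $\lambda_a \ge 0$, each summand $\lambda_a c(C_a)$ must vanish, forcing $\lambda_a = 0$ whenever $c(C_a) > 0$, that is, whenever $a \notin S$. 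This yields $f^* = f + \sum_{a \in S} \lambda_a \chi(C_a)$ and establishes condition (ii).

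For the ``if'' direction I would assume $f^*$ has the stated form satisfying (i) and (ii). Each $\chi(C_a)$ is the incidence vector of a cycle and therefore respects the flow-balance constraints, so $f^*$ conserves flow; condition (i) supplies the capacity bounds, so $f^*$ is a feasible integer flow. Computing its cost as before gives $c(f^*) = c(f) + \sum_{a \in S} \lambda_a c(C_a)$, and for $a \in S$ we have $\overline{c}_a = 0$, hence $c(C_a) = 0$ by \Cref{prop:CycleCost=ReducedCost}. Thus $c(f^*) = c(f)$, and since $f$ is optimal, so is $f^*$.

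I expect the main obstacle to be the bookkeeping in the ``only if'' direction: justifying the rearrangement of $\sum_i \mu_i \sum_a \chi(C_a)$ into $\sum_a \lambda_a \chi(C_a)$ and, above all, pinning down the sign of $\chi_a(C_a)$ so that feasibility of $f^*$ translates into $0 \le \lambda_a \le u_a$. The nonnegativity $\lambda_a \ge 0$ extracted here is exactly what makes the term-by-term vanishing argument valid, and the uniqueness of the representation---already recorded in \Cref{lem318}---is what renders this sign analysis unambiguous.
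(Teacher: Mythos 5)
Your proposal is correct and follows essentially the same route as the paper's proof: flow decomposition via \Cref{prop:flowDecomposition}, the composition identity of \Cref{lem:DirectedComposition}, rearrangement into a sum over induced cycles, the feasibility-based sign analysis giving $0 \le \lambda_a \le u_a$ (which the paper carries out in the main text preceding the theorem), the term-by-term vanishing of $\lambda_a c(C_a)$ to restrict the support to $S$, and the same straightforward converse. Your only deviation is cosmetic: you obtain $c(C_a)\ge 0$ from the tree-solution optimality conditions and \Cref{prop:CycleCost=ReducedCost}, whereas the paper cites the negative cycle optimality condition, and these yield the same conclusion.
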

\begin{proof}
	``$\Rightarrow$'': \Cref{prop:flowDecomposition} gives us $ f^* = f + \sum_{i = 1}^k \mu_i \chi(C_i)$ for some $C_i$ in $D_f$ and $\mu_i > 0$.
	With  Property \ref{prop:c(f,P) = c(P_D)}, we can express $f^*$ through the corresponding cycles $C_{D,i}$ in $D$:
	\[
	f^* = f + \sum_{i = 1}^k \mu_i \chi(C_{D,i}) 
	\overset{\text{\Cref{lem:DirectedComposition}}}{=} f + \sum_{i = 1}^k \mu_i \sum_{a \in C_{D,i}\setminus T}\chi(C_a) 
	\]
	Rearranging the sums and summarizing the cycles and coefficients yield
	$$ f^* = f + \sum_{a \notin T} \lambda_a \chi(C_a)$$
	Since $f^*$ is optimal and thereby in particularly feasible $f^*$ satisfies the capacity constraint of all arcs, which is expressed by $(i)$.\\
	The cost of $f^*$ equals the cost of $f$ and we can show in a similar way like above that $c(f^*) = c(f) + \sum_{a \notin T} \lambda_a c(C_a)$.
	Therefore the cost of all considered cycles $C_a$ must be zero. Remember that no cycle can have negative cost because of the negative cycle optimality condition.
	\Cref{prop:CycleCost=ReducedCost} gives us that $c(C_a) = \overline{c}_a$ or $-\overline{c}_a$. So only arcs  $a \in S$ according to $(ii)$ can be considered for representing $f^*$.\\
	``$\Leftarrow$'': Now consider $ f^* = f + \sum_{a \in S} \lambda_a \chi(C_a)$ with $(i)$ and $(ii)$ holding.
	Then $f^*$ satisfies the capacity constraints because of $(i)$.
	It also satisfies the flow balance constraints because we augmented the flow on cycles, so the flow balance at every node remains.
	The cost of each cycle that is used for augmentation is zero because of $(ii)$ and~\Cref{prop:CycleCost=ReducedCost}. So the cost of $f^*$ is $c(f) + \sum_{a \in S} \lambda_a c(C_a) = c(f)$ and therefore $f^*$ is optimal. 
\end{proof}

\end{document}